%% file: main.tex
\documentclass[11pt,letterpaper]{article}
\input{config/preamble}

\input{config/metadata}
\renewcommand{\PaperReadingBreak}{\clearpage}
\renewcommand{\PaperFloatBarrier}{\FloatBarrier}

\begin{document}
\hypersetup{pageanchor=false}
\begin{titlepage}
\renewcommand{\thefootnote}{\fnsymbol{footnote}}
\centering
\vspace*{0.35in}
{\LARGE\bfseries\PaperTitle\par}
\vspace{0.45in}
\PaperAuthorBlock\par
\vspace{0.35in}
\begin{minipage}{0.92\textwidth}
\begin{center}\textbf{Abstract}\end{center}
\input{sections/00-abstract}
\end{minipage}\par
\vfill
{\DraftDate\par}
\end{titlepage}

\hypersetup{pageanchor=true}
\pagenumbering{roman}
\setcounter{tocdepth}{1}
\tableofcontents
\clearpage
\pagenumbering{arabic}
\input{body}

\label{paper:main-end}

\clearpage
\bibliographystyle{alphaurl}
\bibliography{references}
\end{document}

%% file: config/preamble.tex
\usepackage[T1]{fontenc}
\usepackage[utf8]{inputenc}
\usepackage[margin=1in,letterpaper]{geometry}
\usepackage{amsmath,amsthm,mathtools,aliascnt}
\usepackage{newtxtext,newtxmath}
\usepackage{microtype}
\usepackage{booktabs,tabularx,array}
\usepackage{enumitem}
\usepackage{graphicx,tikz}
\usetikzlibrary{arrows.meta,positioning}
\usepackage{xcolor,xurl}
\usepackage{needspace,placeins}
\usepackage[hidelinks,pdfencoding=auto,psdextra]{hyperref}
\usepackage[nameinlink,capitalise,noabbrev]{cleveref}

\setlist{topsep=0.4em,itemsep=0.2em,parsep=0pt,leftmargin=1.8em}
\input{config/macros}

%% file: config/macros.tex
\newcolumntype{Y}{>{\raggedright\arraybackslash}X}

\providecommand{\PaperTheoremStyle}{plain}
\providecommand{\PaperDefinitionStyle}{definition}
\providecommand{\PaperRemarkStyle}{remark}
\theoremstyle{\PaperTheoremStyle}
\newtheorem{theorem}{Theorem}[section]
\newaliascnt{lemma}{theorem}
\newtheorem{lemma}[lemma]{Lemma}
\aliascntresetthe{lemma}
\newaliascnt{corollary}{theorem}
\newtheorem{corollary}[corollary]{Corollary}
\aliascntresetthe{corollary}
\theoremstyle{\PaperDefinitionStyle}
\newaliascnt{definition}{theorem}

\aliascntresetthe{definition}
\newaliascnt{example}{theorem}

\aliascntresetthe{example}
\theoremstyle{\PaperRemarkStyle}
\newaliascnt{remark}{theorem}

\aliascntresetthe{remark}
\crefname{appendix}{Appendix}{Appendices}
\Crefname{appendix}{Appendix}{Appendices}

\newcommand{\N}{\mathbb{N}}
\newcommand{\Z}{\mathbb{Z}}
\newcommand{\BT}{\operatorname{BT}}
\newcommand{\ord}{\operatorname{ord}}
\newcommand{\val}{\operatorname{val}}
\newcommand{\EqH}{\mathsf{Eq}_{\mathrm{HORS}}}
\newcommand{\EqFour}{\mathsf{Eq}_{4,\Sigma_2}}
\newcommand{\EqProd}{\mathsf{Eq}^{\mathrm{prod}}_{4,\Sigma_2}}
\newcommand{\EqSafeProd}{\mathsf{Eq}^{\mathrm{safe,prod}}_{4,\Sigma_2}}
\newcommand{\Safe}{\operatorname{Safe}}
\newcommand{\PiOne}{\Pi^0_1}

\newcommand{\NN}{\mathsf{N}}
\newcommand{\PP}{\mathsf{P}}
\newcommand{\Pref}{\operatorname{Pref}}
\newcommand{\Prod}{\operatorname{Prod}}
\newcommand{\Valid}{\operatorname{Valid}}
\newcommand{\Num}{\operatorname{Num}}
\newcommand{\Pol}{\operatorname{Pol}}
\newcommand{\Reg}{\operatorname{Reg}}
\newcommand{\Obs}{\operatorname{Obs}}
\newcommand{\halts}{\mathord{\downarrow}}
\newcommand{\nonhalts}{\mathord{\uparrow}}
\newcommand{\Cand}{\mathcal C}
\newcommand{\Legal}{\mathcal L}
\newcommand{\pc}{\operatorname{pc}}
\newcommand{\dom}{\operatorname{dom}}
\newcommand{\ev}[1]{\llbracket #1\rrbracket}
\newcommand{\tyC}{\mathsf C}

\newcommand{\proofentry}[1]{\begingroup\urlstyle{tt}\nolinkurl{#1}\endgroup}

\providecommand{\PaperReadingBreak}{}
\providecommand{\PaperFloatBarrier}{}

\providecommand{\Description}[1]{}

\newcommand{\PaperTitle}{Tree Equivalence of Higher-Order Recursion Schemes Is Undecidable}

%% file: config/metadata.tex
\newcommand{\PaperAuthorNote}{Authors are listed in order of contribution rather than alphabetically, with the first author making the primary contribution.}
\newcommand{\PaperAuthorBlock}{%
  \begin{tabular}{@{}c@{\hspace{2.5em}}c@{}}
    {\large Mingkun Xiao\footnotemark} & {\large Yixuan Sun}\\[0.5em]
    Capital Normal University & Kean University\\[0.35em]
    \href{mailto:1230604016@cnu.edu.cn}{\nolinkurl{1230604016@cnu.edu.cn}} &
    \href{mailto:sunyixua@kean.edu}{\nolinkurl{sunyixua@kean.edu}}
  \end{tabular}%
  \footnotetext{\PaperAuthorNote}}
\newcommand{\DraftDate}{October 27, 2026}
\hypersetup{
  pdftitle={\PaperTitle},
  pdfauthor={Mingkun Xiao and Yixuan Sun},
  pdfsubject={Compiler and finite-semantics proofs for HORS tree equivalence},
  pdfkeywords={higher-order recursion schemes, tree equivalence, undecidability, computability}
}

%% file: sections/00-abstract.tex
Higher-order recursion schemes are finite, simply typed recursive programs
that generate potentially infinite trees. Their monadic second-order
model-checking problem is decidable. We prove that equality of their
generated trees is undecidable and, more precisely, $\PiOne$-complete
under total computable many-one reductions. The lower bound holds for
safe, productive schemes of order at most four over the fixed ranked
alphabet $\Sigma_2=\{e:0,d:2\}$, consisting of one nullary and one
binary symbol.

Adapting the counter-guard witness of Boiret, Pi\'orkowski, and Schmude, we
generate all candidate executions of a two-counter machine by producing
both continuations at every counter test. An integer weight is nonzero
exactly on legal prefixes. Functions at fixed simple types represent
the polynomial of growing degree. A shift recurrence updates them
without storing a separate step counter, allowing the witness to be
implemented by safe applicative HORS rules.
At candidate halts, the intermediate schemes generate unary observation subtrees
whose lengths encode the two natural-number components of the signed weight.
Invalid candidates give matching observations; a legal halt produces
a finite discrepancy. Fixed constructor contexts encode these
intermediate trees over $\Sigma_2$.

For the upper bound, a finite monotone semantics decides head convergence
and supports an algorithm computing exact tree prefixes at any
prescribed finite depth, including positions labelled $\bot$.

%% file: body.tex
\input{sections/01-introduction}
\input{sections/technical-overview}
\label{paper:overview-end}
\PaperReadingBreak
\input{sections/02-preliminaries}
\input{sections/03-witness}
\input{sections/04-compiler}
\input{sections/05-upper-bound}
\input{sections/07-consequences}
\input{sections/06-formalization}

%% file: sections/01-introduction.tex
\section{Introduction}\label{sec:introduction}

Higher-order recursion schemes (HORS) are finite, simply typed recursive
programs generating potentially infinite, ordered, labelled trees.
Their parameters may be functions, while terminal symbols act as tree
constructors.
HORS also generate binder-free encodings of the $\eta$-long B\"ohm trees
of $\lambda Y$-terms, with binding information represented by
de Bruijn levels~\cite[Theorem~11, p.~96]{clairambault2013}.

Given such a finite description, a basic verification task is to decide
whether the generated tree satisfies a specification. Monadic
second-order logic (MSO) provides a language for these specifications,
with quantification over nodes and sets of nodes. Given a scheme and
an MSO sentence, one can decide whether its generated tree satisfies
the sentence~\cite[Theorem~1]{ong2006}. This model-checking result supports
verification of higher-order functional programs~\cite{kobayashi2009}.

A second task is to compare two descriptions. Tree equivalence asks
whether two independently given schemes generate the same tree.
At order one it is decidable via language equivalence for deterministic
pushdown automata~\cite{senizergues1997,jochems2021}.
The higher-order case has long remained
open~\cite{clairambault2013,ong2015,jochems2021}.

For nondeterministic pushdown processes, Broadbent and G\"oller prove
that strong bisimilarity is undecidable already for $\varepsilon$-free
order-two automata~\cite[Theorem~5, p.~167]{broadbent2012}.
Their construction uses the choice among transitions with the same
action label. In deterministic ranked-tree equality, child addresses
fix which subtrees must be compared. The nondeterministic bisimilarity
lower bound therefore does not settle this equality problem.

We prove that exact tree equality is undecidable already when both
schemes are safe and productive, their type order is at most four, and their
terminal alphabet consists of one nullary and one binary symbol.
In a productive scheme, every output node can be exposed after finitely
many unfolding steps. Safety imposes a standard syntactic discipline on
types and parameter dependencies~\cite[Definition~1, p.~8]{hague2017}.
In particular, a function passed as an argument may depend only on rule
parameters whose orders are at least its own; \cref{sec:preliminaries}
gives the full definition.
The construction uses a fixed finite collection of simple types,
independent of the source machine and the length of a candidate execution.

\subsection{Main result and a scheme example}

For a scheme $G$, write $\BT(G)$ for the tree it generates, with $\bot$
marking a position where no terminal head is ever exposed. A productive scheme has no
$\bot$ nodes; finite leaves are allowed. Equality compares labels at
the same child addresses, retaining child order and ignoring the
number of unfolding steps used to expose a node.

\begin{theorem}[Tree equivalence]\label{thm:main}
Tree equivalence of deterministic higher-order recursion schemes is
undecidable and, more precisely, $\PiOne$-complete (co-c.e.-complete) under total
computable many-one reductions. This remains true when both schemes
are required simultaneously to be safe and productive, to have order at most
four, and to use the fixed ranked alphabet
\[
                    \Sigma_2=\{e:0,\ d:2\}.
\]
\end{theorem}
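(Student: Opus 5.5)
The proof has two parts: a $\PiOne$ upper bound, and a $\PiOne$-hardness reduction from non-halting of two-counter machines that produces safe, productive, order-at-most-four schemes over $\Sigma_2$.

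\textbf{Upper bound.} Inequality $\BT(G)\neq\BT(H)$ is c.e., so equality is in $\PiOne$. We enumerate depths $n$ and compute the exact depth-$n$ prefixes of both trees, including positions labelled $\bot$. Plain unfolding does not suffice here, because it cannot certify a $\bot$. Instead we use a finite monotone semantics: we interpret every nonterminal in the finite lattice of monotone functions over a two-point domain (``converges to a terminal head'' versus ``diverges''), built over the simple types. We compute the least fixpoint by Kleene iteration; this terminates because the domains are finite, and it decides head convergence of any term. A prefix is then computed by repeated head reduction: at each node, if the semantics says convergent we reduce to the exposed terminal and recurse into its arguments up to depth $n$, and otherwise we emit $\bot$. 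The two trees differ if and only if some prefix pair differs, which gives a c.e.\ procedure for inequality. The main work in this part is soundness and adequacy of the finite semantics with respect to head convergence, which I would prove by a standard logical-relations argument.

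\textbf{Lower bound.} Fix a two-counter machine $M$. At every zero test we build schemes that generate all candidate executions by branching on both outcomes, so the binary symbol $d$ carries the two continuations. Following the counter-guard idea, each candidate prefix is assigned an integer weight that is nonzero exactly when the prefix is legal, meaning all guessed zero tests were correct. I would obtain this weight as a polynomial in the counter values and the step count whose degree grows along the execution. The polynomial is represented by functions at a fixed simple type, for example Church-style numerals and iterators at order at most three, so that the order of the whole scheme stays at most four. It is updated by a shift recurrence that multiplies by the appropriate guard factor without keeping an explicit step counter.

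At each candidate halt, scheme $G$ outputs the pair of unary observation chains whose lengths are the positive part $w^+$ and the negative part $w^-$ of the weight. Scheme $H$ outputs a swapped or symmetric version, arranged so that the two observations agree exactly when $w=0$, that is, exactly when the candidate is illegal. Non-halting branches continue productively in both schemes with identical shape. Hence $\BT(G)=\BT(H)$ if and only if $M$ never halts from the initial configuration. A fixed constructor context then translates the intermediate alphabet (unary symbols, markers) into $\Sigma_2$ injectively and uniformly, for instance by encoding each symbol as a fixed-shape $d/e$ gadget, and this preserves both equality and inequality.

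\textbf{Expected obstacle.} The hardest step is making all of this safe and of bounded order while the polynomial degree is unbounded. Intermediate functions must not capture lower-order parameters in arguments that are passed at higher order, which constrains how the weight accumulator can be threaded through the recursion. My first attempt would keep all polynomial state as arguments of uniform order three and pass them only alongside the order-four continuation. I would then verify the safety condition rule by rule, and check productivity by ensuring that every rule is guarded by a terminal before recursion.
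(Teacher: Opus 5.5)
Your upper bound matches the paper's: a finite monotone model, a least fixed point computed by iteration, logical-relation adequacy for head convergence, and an exact prefix reader. For the restricted class, add one point. Membership there includes productivity, and your reader also makes that a $\PiOne$ condition (no $\bot$ in any prefix), while safety, order and alphabet are decidable.

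The lower bound follows the paper's architecture, but the step you flag as the obstacle is left unresolved, and as described it would not work. The weight should not be a polynomial in the step count. Evaluating such a polynomial needs $m$ as an argument. An order-two step-counter parameter inside a stored order-three operand violates safety; an example is $\mathsf{Upd}\,p\,q\,t$ with $t:\NN$, which is what the direct update $Z_{m+1}(x)=Z_m(x)(x-m-1)$ requires. The weight is just a number, the product of the guard factors $Z_m(c)$ and $c$. The object of growing degree is $Z_m$, which must be stored as a function so it can be evaluated at the current counter at each test. The missing idea is the shift identity and its form on natural pairs, writing $C_m=(P_m^+,P_m^-)$ with $P_m^+(u,v)-P_m^-(u,v)=Z_m(u-v)$:
\[
 Z_{m+1}(x)=(x-1)\,Z_m(x-1),\qquad
 C_{m+1}(u,v)=(u,v+1)\otimes C_m(u,v+1).
\]
This builds the new components as $\mathsf{Next}_\pm\,p\,q$ from $p,q:\PP$ alone, evaluating the old components at $(u,v+1)$. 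The stored operand then contains only order-three parameters, and the control nonterminals stay at order four. This recurrence advances the polynomial on every instruction. It does not multiply by a guard factor; that factor is applied separately to the weight.

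Two further steps need arguments rather than assertions. First, virtual counters go negative on illegal branches and $Z_m(c)$ can be negative. So every counter and the weight must be kept as \emph{unnormalized} natural pairs, with $\operatorname{dec}(u,v)=(u,v+1)$ and $(u,v)\otimes(r,s)=(ur+vs,us+vr)$. If your $w^+,w^-$ mean the normalized parts $\max(\pm w,0)$, nothing in your arithmetic computes them. The usual Church-numeral predecessor and subtraction need numerals at a raised type, which threatens the order budget. With unnormalized components, your swapped pair of chains works as well as the paper's single chain per side.

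Second, guardedness of the control recursion does not give productivity. At a halt the observation is an arithmetic term such as $h\,a\,z$. It must evaluate to a finite chain of the right length on every candidate, including illegal ones. Moreover, $\mathsf{Mul}\,m\,n\,f\,x\to m\,(n\,f)\,x$ reuses numerals at functional arguments. The paper proves this with the realization relations $\Num$, $\Pol$, $\Reg$, maintained along all candidates. Termination alone could also come from strong normalization of the non-recursive arithmetic fragment, but the exact chain lengths still need a correctness argument of this kind.
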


Two consequences concern other models of computation.
The effective, order-preserving correspondence for safe
schemes~\cite[Theorems~5.1 and~5.3]{knapik2002}
(see also~\cite[Theorem~2, p.~14]{hague2017}) gives undecidability of
ranked-tree equality for deterministic order-four higher-order
pushdown automata \emph{without collapse} (\cref{cor:pda-tree}).
The translations of~\cite[Theorem~11 and Corollary~19]{clairambault2013}
give $\PiOne$-completeness of $\eta$-long B\"ohm-tree equality for
simply typed $\lambda Y$-terms, up to $\alpha$-equivalence
(\cref{cor:lambda-y}). The latter transfer makes no claim that the
same order bound or safety restriction is preserved.

For the construction, we use the intermediate alphabet
$\Sigma_0=\{z:0,a:1,b:3\}$: $z$ ends an observation, $a$ counts one
unit, and $b$ groups an observation with two continuations.
A routine injective encoding replaces these constructors by fixed
$\Sigma_2$-tree contexts, preserving safety, productivity, and the
order-four bound. This is the final pass of the compiler;
\cref{sec:binary-target} proves its specification.

\Cref{fig:scheme-example} gives a scheme with types
$\mathsf{S}:o$, $\mathsf{F}:(o\to o)\to o$, and $\mathsf{D}:(o\to o)\to o\to o$.
The parameter $f$ denotes a unary context, and $\mathsf{D}\,f$ composes that
context with itself. Passing it to the next call of $\mathsf{F}$ gives observation
subtrees $az,a^2z,a^4z,\ldots$ along a right spine, where $a^nz$
is a chain of $n$ nodes labelled $a$ ending in the leaf $z$.
Their unbounded distinct lengths make the resulting tree nonregular,
although its description consists of three rules of order at most two.

\input{figures/scheme-example}

The main theorem also rules out a computable input-size bound on the
depth of a distinguishing node for unequal pairs in this class
(\cref{cor:depth}).

\PaperFloatBarrier

Putting two output trees below a common root does not by itself
compare labels at corresponding addresses. More generally, no total
computable construction can map each pair of safe, productive schemes
of order at most four over $\Sigma_2$ to a HORS whose tree satisfies
a fixed MSO sentence exactly when the input trees differ. Composing
such a construction with an MSO model checker would decide the
equivalence problem in \cref{thm:main}.

\subsection{Contributions}

Boiret, Pi\'orkowski, and Schmude~\cite[Section~4, Theorem~20, p.~48:14]{boiret2018}
prove undecidability for deterministic polynomial-register automata with
substitution, even on word inputs. We adapt their counter-guard witness,
retaining the polynomial $Z_m(X)=\prod_{j=1}^m(X-j)$, the multipliers
$Z_m(c)$ and $c$ for zero and positive guesses, and the first-error
argument. \Cref{sec:witness} specifies the different control checks
and resulting weights.

The compiler realizes arithmetic on pairs of natural numbers, evaluation
of the witness polynomial, and its shift update. Its safe, productive
HORS implementation has three parts:
\begin{enumerate}
\item State calls enforce control flow and identify halts, removing
the register construction's polynomial tests of control states.
The weight checks only the selected counter guard.
\item Functions store the evolving polynomial at fixed simple types.
A shift recurrence constructs new components from the old ones without
retaining the step count as a lower-order parameter. The stored polynomial
terms satisfy the operand condition for safety as the degree grows.
\item Tree equality expresses cancellation. A signed weight
$W=W^+-W^-$ is zero exactly when its natural components are equal,
even if both are large. Generating a unary observation subtree for each
component expresses cancellation as equality of two varying trees.
\end{enumerate}

\subsection{Proof architecture}

Let $M$ be a deterministic two-counter Minsky machine.
Write $M\halts$ for halting from the first instruction with both
counters zero, and $M\nonhalts$ for $\neg(M\halts)$.
This zero-start halting problem is undecidable~\cite{dudenhefner2022}.
\Cref{sec:overview-candidates} gives the instruction semantics;
\cref{sec:preliminaries} fixes the source decision problem.
The first compiler stage produces $G_M^L,G_M^R$ over the intermediate
alphabet $\Sigma_0=\{z:0,a:1,b:3\}$.
Its final pass replaces the constructors by fixed tree contexts $E$
and outputs $\widehat G_M^\diamond=(G_M^\diamond)^E$ over $\Sigma_2$,
for $\diamond\in\{L,R\}$. We prove
\begin{equation}\label{eq:main-reduction}
 M\nonhalts\quad\Longleftrightarrow\quad
                 \BT(\widehat G_M^L)=\BT(\widehat G_M^R).
\end{equation}
In the intermediate trees, each zero test generates two continuation subtrees:
one follows the instruction for a zero counter, the other the instruction
for a positive counter. Both are generated without checking which
answer is correct. A path through these continuations describes a
\emph{candidate execution}. Its guessed answers may be wrong; a finite
prefix is \emph{legal} when every guess agrees with the counter value
at that step. Wrong guesses can even make the candidate's counters negative.

Along each candidate, arithmetic updates maintain an integer weight
$W$ that is nonzero exactly on legal prefixes. We represent it as
$W=W^+-W^-$ with natural components. If the candidate reaches a halting
position, the left and right schemes generate the unary observation
subtrees $a^{W^+}z$ and $a^{W^-}z$, respectively.
These observations agree exactly when $W=0$.
An illegal candidate therefore gives matching observations, while a
legal halt gives a finite discrepancy. Outside these observations,
the two trees have matching labels and branching, so every discrepancy
comes from a legal halt.

We prove that the stored functions can be reused in arithmetic and
that every candidate branch remains productive. An address argument shows that
all mismatches occur at halt observations. The technical
overview follows a two-instruction machine through the witness updates
and HORS rules to a distinguishing address.

Jochems's effective transformation replaces every $\bot$ node by an
infinite chain of a fresh terminal, preserving and reflecting tree
equality~\cite[Lemma~2.7 and Proposition~2.8, p.~42]{jochems2021}.
Its outputs are productive, so enumerating their finite prefixes
semidecides inequality and yields the general co-c.e.\ upper bound.
We give a direct finite-semantics adequacy proof~\cite[Theorem~4.1, p.~10]{salvati2015}
and an algorithm computing exact prefixes at any prescribed depth.

\paragraph{Organization.}
After the technical overview, \cref{sec:preliminaries} fixes the
standard syntax, observations, and source convention.
\Cref{sec:witness} defines candidate executions and their arithmetic witness.
\Cref{sec:compiler} proves the compiler's correctness, typing, safety,
and two-terminal output.
\Cref{sec:upper} supplies the matching upper bound, and
\cref{sec:consequences} derives the mathematical consequences.

%% file: figures/scheme-example.tex
\begin{figure*}[htbp]
  \centering
  \begin{minipage}[c]{0.43\textwidth}
    \[
    \begin{aligned}
      \mathsf{S} &\longrightarrow \mathsf{F}\,a,\\
      \mathsf{F}\,f &\longrightarrow c\,(f\,z)\,(\mathsf{F}\,(\mathsf{D}\,f)),\\
      \mathsf{D}\,f\,x &\longrightarrow f\,(f\,x).
    \end{aligned}
    \]
    \centering
    $z:0,\quad a:1,\quad c:2$
  \end{minipage}
  \hfill
  \begin{minipage}[c]{0.53\textwidth}
    \centering
    \begin{tikzpicture}[>=Stealth, every node/.style={font=\normalsize},
      level/.style={circle,draw,inner sep=2.5pt}]
      \node[level] (c0) at (0,0) {$c$};
      \node (l0) at (-0.65,-0.85) {$az$};
      \node[level] (c1) at (1.1,-0.85) {$c$};
      \node (l1) at (0.5,-1.7) {$a^2z$};
      \node[level] (c2) at (2.2,-1.7) {$c$};
      \node (l2) at (1.6,-2.55) {$a^4z$};
      \node (more) at (3.0,-2.55) {$\cdots$};
      \draw[->] (c0) -- (l0);
      \draw[->] (c0) -- (c1);
      \draw[->] (c1) -- (l1);
      \draw[->] (c1) -- (c2);
      \draw[->] (c2) -- (l2);
      \draw[->] (c2) -- (more);
    \end{tikzpicture}
  \end{minipage}
  \caption{A finite order-two scheme and its unique output tree.
  Each expression $a^nz$ abbreviates a finite unary subtree. Passing $\mathsf{D}\,f$ to the
  recursive call doubles the context used for the next observation.}
  \label{fig:scheme-example}
  \Description{Three recursive rules generate a right spine of binary c nodes.
  Their left children are unary a chains ending in z, of lengths one, two,
  four, and successive powers of two.}
\end{figure*}

%% file: sections/technical-overview.tex
\section{Technical overview}\label{sec:overview}

Given a two-counter machine $M$, we construct two safe, productive HORS
of order at most four whose trees agree exactly when $M$ does not halt.
The first stage uses $\Sigma_0=\{z:0,a:1,b:3\}$ to separate observations
from continuations; fixed constructor contexts then give the final
alphabet $\Sigma_2=\{e:0,d:2\}$.
The counter-guard witness and its first-error argument are adapted from
Boiret, Pi\'orkowski, and Schmude~\cite[Section~4]{boiret2018}.
The construction below stores the evolving polynomial at fixed types
and expresses a zero weight by comparing two unary observation trees.

\subsection{Candidate executions and the adapted witness}\label{sec:overview-candidates}

We use deterministic two-counter Minsky machines with instruction forms
\[
 \iota::=\mathsf{inc}_A\mid\mathsf{inc}_B
       \mid\mathsf{dec}_A(j)\mid\mathsf{dec}_B(j),
 \qquad j\in\N.
\]
A program is a finite list $M=[\iota_1,\ldots,\iota_L]$, with
$L=|M|$ and instruction positions $I_M=\{1,\ldots,L\}$.
The lookup $M(i)=\iota_i$ is defined exactly for $i\in I_M$.
Jump targets may be any natural number, including zero and positions
beyond $L$.

For $X\in\{A,B\}$, an instruction $\mathsf{inc}_X$ increments counter
$X$ and advances to the next instruction. An instruction
$\mathsf{dec}_X(j)$ advances when $X=0$; otherwise it decrements $X$
and jumps to $j$. The other counter is unchanged.
Execution starts at position $1$ with both counters zero and halts
exactly when the program counter is outside $I_M$; no separate halt
instruction is needed. The empty program has $I_M=\varnothing$ and
halts immediately.

To construct the candidate tree, unfold this finite control from the
initial configuration. At each test, generate both successors without
checking their guards, labelled $1$ for the zero choice and $2$ for
the positive choice. An increment gives two copies of its successor;
a halt gives a leaf. A path describes a \emph{candidate execution},
whose virtual counters may become negative after a wrong positive
choice. \Cref{sec:candidate-semantics} defines this unfolding and its
relation to the source steps.

Write $\Cand_M\subseteq\{1,2\}^*$ for these finite candidate prefixes,
including $\epsilon$. Let $\Legal_M\subseteq\Cand_M$ be the set
of prefixes on which every guessed guard holds.
For $v\in\Cand_M$, let $\pc_M(v)$ be its final program counter.

The unfolding fixes the control flow, so the weight only needs to detect
incorrect counter guards. After $m$ instructions of a legal execution,
each counter lies in $[0,m]$. The polynomial
\[
 Z_m(X)=\prod_{j=1}^{m}(X-j),\qquad Z_0(X)=1,
\]
satisfies $Z_m(0)=(-1)^m m!\ne0$ and $Z_m(c)=0$ for $1\le c\le m$.
Initialize $W=1$. For the current instruction $\iota$ on counter
$X\in\{A,B\}$, let $c$ be its old value and $e\in\{1,2\}$ the
candidate choice: at a test, $e=1$ guesses zero and $e=2$ guesses positive.
Update the selected counter and weight simultaneously:
\begin{equation}\label{eq:overview-update}
 (c',W')=
 \begin{cases}
  (c+1,W),&\iota=\mathsf{inc}_X,\\
  (c,WZ_m(c)),&\iota=\mathsf{dec}_X(j)\land e=1,\\
  (c-1,Wc),&\iota=\mathsf{dec}_X(j)\land e=2.
 \end{cases}
\end{equation}
All right-hand sides use the old values. Every instruction also advances
$m$ and replaces $Z_m$ by $Z_{m+1}$.
Write $W_M(v)$ for the resulting weight of a prefix $v$; in particular,
$W_M(\epsilon)=1$.

Correct choices multiply the weight by a nonzero integer. At the first
wrong choice the tested counter is still in $[0,m]$: a false zero
choice has $1\le c\le m$ and multiplies by $Z_m(c)=0$; a false positive
choice has $c=0$ and also multiplies by zero. Later updates preserve zero even if the virtual
counters become negative. Consequently
\begin{equation}\label{eq:overview-witness}
 \forall v\in\Cand_M,\qquad
 W_M(v)\ne0\quad\Longleftrightarrow\quad v\in\Legal_M.
\end{equation}
This is the first-error argument of the polynomial-register construction.
\Cref{sec:witness} proves the invariant and its correspondence with
the precise source-machine convention.

To realize the witness in HORS, we represent signed integers by pairs
of natural numbers and the evolving polynomial by functions at fixed
types. Comparing unary subtrees generated from the two integer
components expresses the zero condition. The polynomial-register
construction uses polynomial-substitution primitives; our function
representation implements the required evaluations and updates by
ordinary HORS application.

\subsection{Encoding cancellation by comparison}

Represent an integer by a pair $(u,v)\in\N^2$ with value $u-v$.
The pair is not normalized. Increment, decrement, and multiplication
are realized by
\[
\begin{aligned}
 \operatorname{inc}(u,v)&=(u+1,v),&
 \operatorname{dec}(u,v)&=(u,v+1),\\
 (u,v)\otimes(r,s)&=(ur+vs,us+vr).&&
\end{aligned}
\]
The multiplication components have difference
$(u-v)(r-s)$. All component operations use successor, addition, and
multiplication on naturals. A negative virtual counter after an
invalid choice still has two natural components, and decrement requires
no predecessor operation.

Keep the weight as $W=W^+-W^-$. In each intermediate output tree, a candidate node
has label $b$: child $0$ carries an observation, and children $1,2$
carry the continuations. At nonhalting candidates the observation is $z$.
At a candidate halt, the left scheme generates the observation subtree
$a^{W^+}z$ and the right scheme generates $a^{W^-}z$.
Each is a finite chain of $a$-nodes ending in $z$; the continuation
children then use the shared padding defined in \cref{sec:overview-control}.
The two observations satisfy
\begin{equation}\label{eq:overview-cancellation}
 a^{W^+}z=a^{W^-}z\quad\Longleftrightarrow\quad W=0.
\end{equation}
Equality of the generated trees supplies this comparison. For example,
the pair $(1,1)$ represents zero and gives $az$ on both sides. Comparing
its left observation with $z$ would report a discrepancy.
Thus the register construction's comparison with a constant-zero
function becomes a comparison of two varying output trees.

\subsection{A safe polynomial representation at fixed types}\label{sec:overview-order}

The polynomial $Z_m$ has degree $m$. Listing all its coefficients would
require a growing collection of coordinates. We retain two functions
$P_m^+,P_m^-:\N^2\to\N$ satisfying
\[
 P_m^+(u,v)-P_m^-(u,v)=Z_m(u-v).
\]
To represent their inputs and outputs in HORS, use the simple types
\[
 \tyC=o\to o,\qquad
 \NN=\tyC\to\tyC,\qquad
 \PP=\NN\to\NN\to\NN.
\]
A Church numeral of type $\NN$ represents iteration: the numeral for
$n$, applied to $f,x$, gives $f^n(x)$. A stored polynomial component
has type $\PP$ and maps two such numerals to a numeral for its value.
The order function
\begin{equation}\label{eq:type-order}
 \ord(o)=0,\qquad
 \ord(\sigma\to\tau)=\max\{\ord(\sigma)+1,\ord(\tau)\}
\end{equation}
gives $\ord(\tyC)=1$, $\ord(\NN)=2$, and $\ord(\PP)=3$.

Safety requires homogeneous types and a bound on the parameters of
operands~\cite[Definition~1]{hague2017}. Homogeneity means that curried
parameter orders are nonincreasing, recursively at argument types:
$\NN=\tyC\to o\to o$ has parameter orders $1,0$, and
$\PP=\NN\to\NN\to\tyC\to o\to o$ has orders $2,2,1,0$.
A term $s$ supplied as an argument must satisfy
$\ord(s)\le\ord(x)$ for every rule parameter $x$ occurring in $s$.
\Cref{sec:preliminaries} gives the full definition.

This condition constrains how we update the stored functions $p,q:\PP$.
The polynomial-register construction stores the step counter $m$ and
uses $Z_{m+1}(x)=Z_m(x)(x-m-1)$~\cite[Section~4]{boiret2018}.
A direct encoding that passes the operand
$\mathsf{Upd}\,p\,q\,t:\PP$, with a rule parameter $t:\NN$
representing $m$, violates safety: the order-three operand contains
an order-two parameter. We therefore use the shift identity
\[
 Z_{m+1}(x)=(x-1)Z_m(x-1).
\]
Writing $C_m=(P_m^+,P_m^-)$, it yields
\begin{equation}\label{eq:overview-polynomial}
 C_0(u,v)=(1,0),\qquad
 C_{m+1}(u,v)=(u,v+1)\otimes C_m(u,v+1).
\end{equation}
Taking differences proves the representation for all natural $u,v$,
including $u<v$. In particular,
\[
 P_{m+1}^+(u,v)=uP_m^+(u,v+1)+(v+1)P_m^-(u,v+1),
\]
and the negative component interchanges $P_m^+$ and $P_m^-$.
The index $m$ belongs to this mathematical invariant; the compiled
state does not carry a step-counter parameter.

Ordinary Church arithmetic implements the component operations.
The helpers $\mathsf{Pos},\mathsf{Neg}$ compute $ur+vs$ and $us+vr$;
for example,
\[
 \mathsf{Pos}\ u\ v\ r\ s\ f\ x
   \to\mathsf{Add}\,(\mathsf{Mul}\ u\ r)\,
                         (\mathsf{Mul}\ v\ s)\,f\,x.
\]
The full rules for numerals, successor, addition and multiplication
are given in \cref{sec:rules}; these helpers have order at most three.
Introduce $\mathsf{Next}_\pm:\PP\to\PP\to\PP$, of order four.
If $p,q$ represent $P_m^+,P_m^-$, then
$\mathsf{Next}_+\,p\,q$ and $\mathsf{Next}_-\,p\,q$ represent the next
components. The positive rule is
\begin{equation}\label{eq:overview-next}
\begin{aligned}
 \mathsf{Next}_+\ p\ q\ u\ v\ f\ x
  \to{}&\mathsf{Pos}\ u\ (\mathsf{Succ}\ v)\\[-2pt]
       &\quad(p\,u\,(\mathsf{Succ}\ v))\,
             (q\,u\,(\mathsf{Succ}\ v))\ f\ x.
\end{aligned}
\end{equation}
The negative update replaces $\mathsf{Pos}$ by $\mathsf{Neg}$.
Here $p,q:\PP$, $u,v:\NN$, $f:\tyC$, and $x:o$; the right-hand
side has ground type $o$. The two polynomial calls evaluate the old
components at the same shifted argument pair.

The stored operand $\mathsf{Next}_\pm\,p\,q:\PP$ depends only on the
order-three parameters $p,q$. Numeral arguments are supplied later:
$p\,u\,(\mathsf{Succ}\ v)$ has order two, at most the order of each
parameter it contains. Its order-three prefix $p\,u$ is in function
position and has no operand inequality of its own.
\Cref{sec:safety-audit} checks all operand families and homogeneous types.

Repeated updates build nested applications of these fixed nonterminals.
The terms grow in size and represent polynomials of increasing degree,
while \emph{their type remains $\PP$}; \cref{lem:polynomial} proves the invariant.

\subsection{Control rules, reusable representations, and productivity}\label{sec:overview-control}

The arithmetic terms now form the parameters of state calls that
generate the candidate tree. Each source instruction at position $i$
is compiled into the rule defining a control nonterminal $\mathsf{F}_i$;
$\mathsf{F}_0$ handles halts. A state call has the form
\[
 \mathsf{F}_i\,p\,q\,u\,v\,r\,s\,h\,k,
 \qquad \mathsf{F}_i:\PP\to\PP\to\NN^6\to o.
\]
The terms $p,q$ represent the current polynomial components.
The numeral pairs $(u,v),(r,s),(h,k)$ represent the signed values of
the counters $A,B$ and weight $W$, respectively. The start rule supplies
representations of $C_0$, zero counter pairs, and weight pair $(1,0)$.
The control type also has order four. The source program determines
the number of control rules, while their types are fixed;
\cref{lem:typing} gives the complete type audit.
Normalize source positions by
\[
 \widehat j=\begin{cases}j,&j\in I_M,\\0,&j\notin I_M.\end{cases}
\]
Thus every out-of-range program position uses the halting nonterminal $\mathsf{F}_0$.
Write
$p'=\mathsf{Next}_+\,p\,q$ and $q'=\mathsf{Next}_-\,p\,q$ as
textual abbreviations. If instruction $i$ is $\mathsf{dec}_A(j)$,
its control rule is
\begin{equation}\label{eq:overview-control}
\begin{aligned}
 \mathsf{F}_i\,p\,q\,u\,v\,r\,s\,h\,k\ \to\ b\,z\
 &\bigl(\mathsf{F}_{\widehat{i+1}}\,p'\,q'\,u\,v\,r\,s\\[-2pt]
 &\qquad(\mathsf{Pos}\ h\ k\ (p\,u\,v)\ (q\,u\,v))\\[-2pt]
 &\qquad(\mathsf{Neg}\ h\ k\ (p\,u\,v)\ (q\,u\,v))\bigr)\\
 &\bigl(\mathsf{F}_{\widehat j}\,p'\,q'\,u\,(\mathsf{Succ}\ v)\,r\,s\\[-2pt]
 &\qquad(\mathsf{Pos}\ h\ k\ u\ v)\,
          (\mathsf{Neg}\ h\ k\ u\ v)\bigr).
\end{aligned}
\end{equation}
The ternary constructor $b$ reserves child $0$ for an observation and
children $1,2$ for the zero and positive continuations. Both use the
old counter in the weight update; the zero choice also uses the old
polynomial. Both pass the new polynomial. There is one production,
so emitting both continuations introduces no nondeterminism.
At an increment the two continuation children are identical.

At a halt, the numeral parameters $h,k:\NN$ represent the weight
components $W^+,W^-$. Applying them to the constructors $a,z$ gives
ground terms $h\,a\,z$ and $k\,a\,z$, which generate the observation
chains $a^{W^+}z$ and $a^{W^-}z$.
Introduce a shared padding nonterminal $\mathsf{K}:o$ with rule
$\mathsf{K}\to b\,z\,\mathsf{K}\,\mathsf{K}$ for the two continuation children after a halt.
The left halting rule has body $b\,(h\,a\,z)\,\mathsf{K}\,\mathsf{K}$ and the right
has body $b\,(k\,a\,z)\,\mathsf{K}\,\mathsf{K}$. These are the only different rules;
the padding subtrees are identical and productive in both outputs.

\paragraph{Running example.}
Take $M=\mathsf{inc}_A;\mathsf{dec}_A(0)$, starting from zero.
After the increment, $A=1$, $m=1$, and $W=1$.
The zero choice advances to position $3$, while the positive choice
decrements $A$ and jumps to $0$. Both positions are out of range,
but only the positive choice is legal.
From \cref{eq:overview-polynomial}, $C_1(u,v)=(u,v+1)$.
The false zero choice therefore multiplies the initial weight pair
$(1,0)$ by $C_1(1,0)=(1,1)$ and obtains $(1,1)$.
The positive choice multiplies by the old counter pair $(1,0)$ and
retains weight $(1,0)$. The observations appear in
\cref{fig:reduction}.

\input{figures/reduction}

In the compiled execution, the increment constructs
$U=\mathsf{Succ}\,\mathsf{Zero}$. This term and $\mathsf{One}$ both
represent the number $1$.
Let $p_1,q_1$ be the updated polynomial terms, representing
$P_1^+,P_1^-$. The false zero branch passes the weight pair
$(h_0,k_0)$ in both schemes, where
\[
 h_0=\mathsf{Pos}\,\mathsf{One}\,\mathsf{Zero}\,
       (p_1\,U\,\mathsf{Zero})\,(q_1\,U\,\mathsf{Zero})
\]
and $k_0$ is the corresponding $\mathsf{Neg}$ term. Both components
represent one. At the halt, the observation terms $h_0\,a\,z$ and
$k_0\,a\,z$ both generate the chain $az$.
On the positive branch, the multiplier pair is $(U,\mathsf{Zero})$,
and the halt observations are $az,z$. Choosing child $1$ at the
increment, the corresponding observation addresses are $1\,1\,0$
and $1\,2\,0$; the latter already has different labels.

\paragraph{Representations on every candidate.}
The stored numerals must remain usable in later arithmetic, even when
they are composed terms such as $U$. Correctness only at the observation
arguments $a,z$ does not justify these uses. For example,
\[
 \mathsf{Mul}\,m\,n\,f\,x\to m\,(n\,f)\,x
\]
supplies $m$ with a function computed from $n$.
A numeral must therefore implement the represented number of iterations
for every related function and ground argument. If $m,n$ represent
$r,s$ and $f$ represents a function $F$, then $n\,f$ represents $F^s$;
applying $m$ gives $rs$ iterations of $F$.
The typed realization relation in \cref{eq:arithmetic-relation}
formalizes this requirement for numerals and polynomial functions.

Induction on candidate-prefix length maintains the numeral property
for all six natural-number components and the function property for
both polynomial terms. This includes illegal candidates with zero
signed weight or negative virtual counters: every component still
represents a finite natural number, as proved in \cref{lem:registers}.
The witness invariant identifies legal prefixes through the signed
weight; the representation invariant validates arithmetic on every candidate.

\paragraph{Productivity.}
A control rule exposes $b$ immediately, but its observation child may
still require arithmetic. Numeral realization ensures that the selected
component produces a finite chain. The continuation children again
satisfy the representation invariant, or enter common padding.
Induction on observation depth therefore proves productivity of both
whole trees, including every illegal branch. Guarded control recursion
alone would leave the arithmetic observations unjustified.

\subsection{Closing both directions of the reduction}

Let $v\in\{1,2\}^*$ encode a candidate prefix by its continuation
choices. Corresponding control nodes in the two schemes carry the
same register terms. The all-address correspondence of
\cref{lem:state} identifies their observations with those specified
by the arithmetic updates, including the padding and absent addresses.

If $v$ reaches a legal halt with weight components $r,s$, then
\cref{eq:overview-witness} gives $r\ne s$. Its observation children
are $a^rz$ and $a^sz$. At the address
\begin{equation}\label{eq:overview-disagreement}
                         v\,0\,0^{\min(r,s)},
\end{equation}
one tree has $z$ and the other has $a$. Thus a source halt gives an
actual finite HORS discrepancy.

For the converse, take a first discrepancy. Productivity and the
ranked alphabet ensure that both nodes there exist: their agreeing
ancestor labels determine the same child arities. Following only
children $1,2$ cannot give different labels, since every control or
padding root is $b$. Padding subtrees agree completely. A discrepancy
must therefore select child $0$ of a candidate state. A nonhalting
state has observation $z$ on both sides. At a halt the only possible
disagreement is between unequal chains, so $r\ne s$ and the signed
weight is nonzero. The entire candidate is legal by
\cref{eq:overview-witness}, and its control position is a source halt.
Write $T_\diamond=\BT(G_M^\diamond)$, with $\dom(T)$ denoting
a tree's address set and $T(w)$ its label at $w$. Then
\[
\begin{aligned}
 M\halts
 &\Longleftrightarrow \exists v\in\Cand_M\;
       [\pc_M(v)\notin I_M\land W_M(v)\ne0]\\
 &\Longleftrightarrow \exists w\in\dom(T_L)\cap\dom(T_R)\;
       [T_L(w)\ne T_R(w)]\\
 &\Longleftrightarrow \BT(G_M^L)\ne\BT(G_M^R).
\end{aligned}
\]
The intermediate compiler only inserts fixed finite templates into a
table indexed by the source instructions, so it is total computable.
\Cref{thm:compiler} states this stage's specification.

\subsection{Realizing the two-terminal alphabet}

The constructors $z,a,b$ separate the observation and control structure
throughout the argument. The final compiler pass realizes each by a
fixed tree context over $\Sigma_2=\{e:0,d:2\}$:
\[
\begin{aligned}
 E(z)&=e,\\
 E(a\,t)&=d(e,E(t)),\\
 E(b(t_0,t_1,t_2))
   &=d\bigl(d(e,e),d(E(t_0),d(E(t_1),E(t_2)))\bigr).
\end{aligned}
\]
The root separates $z$ from $a,b$, and the left child's root separates
$a$ from $b$. Their source children occur at fixed target addresses,
so decoding finite addresses recovers every source label.
Thus $E(T)=E(U)\iff T=U$, for finite and infinite trees.
Three helper rules implement these contexts at order at most one.
They preserve safety and productivity; \cref{lem:binary-encoding}
gives the syntactic and semantic argument.

For the final schemes $\widehat G_M^\diamond=(G_M^\diamond)^E$,
\[
 M\nonhalts\iff\BT(G_M^L)=\BT(G_M^R)
       \iff\BT(\widehat G_M^L)=\BT(\widehat G_M^R).
\]
This proves the lower bound of \cref{thm:main}, including its
two-terminal restriction. The constructor encoding is routine;
the witness implementation and its type and safety arguments are
unchanged by this final pass.

\subsection{The upper bound and completeness}

Since zero-start machine halting is c.e.-complete
(\cref{sec:preliminaries}), the reduction from its complement gives
$\PiOne$-hardness for the stated restricted class.
The general co-c.e.\ upper bound already follows from Jochems's effective
$\bot$-elimination~\cite[Lemma~2.7 and Proposition~2.8]{jochems2021}.
Its productive outputs permit finite-prefix search for a discrepancy.
\Cref{sec:upper} gives a direct argument using the standard finite
model for head convergence~\cite[Section~4.1 and Theorem~4.1]{salvati2015}.

For a closed ground term, this model records only whether a terminal
head can be exposed; $z$ and $az$ both have value one. Higher-type
values are finite function tables. Monotone iteration of the rule
interpretations from the least values stabilizes, and head adequacy
makes the final answer exact. To read an address, start at the root.
A positive answer permits a terminating head search and, if the address
continues, descent through a valid child index. A negative answer makes
the current node $\bot$; all its proper descendants are absent.
An out-of-range child index is also absent.

The resulting reader distinguishes a terminal label, a present
$\bot$ node, and an absent address, and computes an exact prefix
$\Pref(G,d)$ at any prescribed finite depth $d$.
For well-formed schemes over a common ranked alphabet,
\[
 \BT(G)=\BT(H)\quad\Longleftrightarrow\quad
             \forall d\;\Pref(G,d)=\Pref(H,d).
\]
Productivity means that no such prefix contains $\bot$.
Together with the decidable syntactic restrictions, these universal
conditions give a $\PiOne$ definition of the restricted problem as
an ordinary language. Combining this upper bound with the reduction
proves the completeness claim of \cref{thm:main}.

%% file: figures/reduction.tex
\begin{figure*}[htbp]
\centering
\begin{minipage}[t]{0.34\textwidth}
\centering
\textit{Shared test node}\par\smallskip
\begin{tikzpicture}[>=Stealth,every node/.style={font=\normalsize},
  state/.style={circle,draw,inner sep=1.5pt}]
 \node[state] (t) at (0,0) {$b$};
 \node (z) at (-1.55,-0.95) {$z$};
 \node[state] (h0) at (-0.35,-1.2) {$b$};
 \node[state] (h1) at (1.45,-1.2) {$b$};
 \draw[->] (t)--node[above left,font=\footnotesize]{$0$}(z);
 \draw[->] (t)--node[left,font=\footnotesize]{$1$}(h0);
 \draw[->] (t)--node[above right,font=\footnotesize]{$2$}(h1);
 \foreach \h/\x/\o in {h0/-0.35/1,h1/1.45/2}{
   \node (o\o) at (\x-0.6,-2.35) {$O_\o$};
   \node (k\o) at (\x,-2.35) {$\mathsf{K}$};
   \node (j\o) at (\x+0.6,-2.35) {$\mathsf{K}$};
   \draw[->] (\h)--node[left,font=\footnotesize]{$0$}(o\o);
   \draw[->] (\h)--(k\o);
   \draw[->] (\h)--(j\o);
 }
\end{tikzpicture}
\end{minipage}\hfill
\begin{minipage}[t]{0.29\textwidth}
\centering
\textit{Invalid zero choice}\par\smallskip
$(W^+,W^-)=(1,1)$\par\smallskip
\begin{tikzpicture}[>=Stealth,every node/.style={font=\normalsize}]
 \node (l) at (0,0) {$a$}; \node (r) at (1.5,0) {$a$};
 \node (lz) at (0,-1.05) {$z$}; \node (rz) at (1.5,-1.05) {$z$};
 \draw[->] (l)--(lz); \draw[->] (r)--(rz);
 \node at (0,-1.65) {$O_1^L$}; \node at (1.5,-1.65) {$O_1^R$};
\end{tikzpicture}
\end{minipage}\hfill
\begin{minipage}[t]{0.29\textwidth}
\centering
\textit{Legal positive choice}\par\smallskip
$(W^+,W^-)=(1,0)$\par\smallskip
\begin{tikzpicture}[>=Stealth,every node/.style={font=\normalsize}]
 \node[draw,inner sep=2pt] (l) at (0,0) {$a$};
 \node[draw,inner sep=2pt] (r) at (1.5,0) {$z$};
 \node (lz) at (0,-1.05) {$z$};
 \draw[->] (l)--(lz);
 \node at (0,-1.65) {$O_2^L$}; \node at (1.5,-1.65) {$O_2^R$};
\end{tikzpicture}
\end{minipage}
\caption{The program $\mathsf{inc}_A;\mathsf{dec}_A(0)$ at its test
after the increment.
Both schemes generate this whole control skeleton; $\mathsf{K}$ marks
the common padding subtrees. The observations $O_1,O_2$ correspond to
the zero and positive choices, respectively. The invalid zero choice gives weight components $(1,1)$ and
matching observations $az$; the legal positive choice gives $(1,0)$
and observations $az,z$. The differing labels are boxed.}
\label{fig:reduction}
\Description{A shared ternary test node branches to two halt nodes.
For the invalid zero branch both schemes output an a node above a z
leaf. For the legal positive branch the left observation has an a
node above z, while the right observation is a z leaf.}
\end{figure*}

%% file: sections/02-preliminaries.tex
\section{Schemes, observations, and decision problems}\label{sec:preliminaries}

Simple types are generated by $o$ and $\sigma\to\tau$, with arrows
associating to the right; $\sigma^r\to\tau$ denotes $r$ curried
arguments of type $\sigma$. We use the order function defined in
\cref{eq:type-order}.
A ranked alphabet assigns a natural-number arity $r$ to each terminal
$a$, whose type is $o^r\to o$. A deterministic higher-order recursion
scheme $G$ consists of a finite such alphabet $\Sigma$, a finite typed
set of nonterminals, a start symbol $S:o$, and exactly one rule
$F x_1\cdots x_r\to t_F$ per nonterminal
$F:\sigma_1\to\cdots\to\sigma_r\to o$.
The parameters are distinct, $x_i:\sigma_i$, and $t_F:o$ is an
applicative term over these parameters, terminals, and declared
nonterminals. Terms have the syntax $t::=x\mid F\mid a\mid t\,t$.
Thus a rule has a ground body even when a partial application of its
nonterminal has functional type. The scheme order is the maximum
order of its nonterminal types.
Concrete nonterminal names in the constructions, such as
$\mathsf{F}_i$, $\mathsf{S}$, and $\mathsf{K}$, are set in sans serif;
schematic symbols and rule parameters are italic.

\paragraph{Safety.}
We use \cite[Definition~1]{hague2017}. A type
$\sigma_1\to\cdots\to\sigma_r\to o$ is homogeneous if each
$\sigma_i$ is homogeneous and
$\ord(\sigma_1)\ge\cdots\ge\ord(\sigma_r)$.
A rule is safe if its nonterminal and all body subterms have
homogeneous types, and every application operand $s$ satisfies
$\ord(s)\le\ord(x)$ for each rule parameter $x$ occurring in $s$.
In $u\,s$, this condition applies to $s$, not to the function prefix
$u$. Nonterminal constants do not count as parameters.
A scheme is safe if all its rules are safe. Typing and finite syntax
traversal decide this property.

\paragraph{Head rewriting.}
For closed ground terms, the only unfolding rule is
\[
 F\,u_1\cdots u_r\;\longrightarrow_h\;
 t_F[u_1/x_1,\ldots,u_r/x_r],
\]
where substitution is simultaneous. No reduction under a terminal is
needed to expose its head. Write $t\Downarrow_G a(t_0,\ldots,t_{r-1})$
when finitely many head steps produce that terminal application.
A well-typed closed ground term has either a terminal or a fully applied
nonterminal at its head. Unfolding preserves typing, by induction on
the body with simultaneous substitution. Since each nonterminal has one
rule, head steps are deterministic, and two head exposures of the same
term have the same terminal and the same list of children.

Addresses are words over $\N$, with zero-based child indices. Define
positive observations by
\[
\begin{aligned}
 \Obs_G(t,\epsilon,a)&\iff \exists\vec u\;t\Downarrow_G a(\vec u),\\
 \Obs_G(t,iw,a)&\iff
 \exists c,\vec u\;[t\Downarrow_G c(\vec u)\land i<|\vec u|
                         \land\Obs_G(u_i,w,a)].
\end{aligned}
\]
The value tree $\BT_G(t)$ has root $a$ and recursively these children
when $t\Downarrow_G a(\vec u)$; otherwise its root is a fresh nullary
symbol $\bot$. This defines a possibly infinite ranked tree over
$\Sigma\cup\{\bot:0\}$. In particular, a child below a leaf is absent,
whereas a divergent term at an actual child gives a present $\bot$ node.
The positive observations determine this tree: the root either has its
unique observed label or is $\bot$, and the same reasoning applies to
each child required by an observed label's arity. Hence
\begin{equation}\label{eq:observation-equality}
 \BT(G)=\BT(H)\iff
 \forall w\in\N^*,\ a\in\Sigma\;
 [\Obs_G(S_G,w,a)\leftrightarrow\Obs_H(S_H,w,a)].
\end{equation}
Here and below $\BT(G)=\BT_G(S_G)$, with a shared ranked alphabet.

A term is productive to depth zero unconditionally; it is productive
to depth $d+1$ if it exposes a terminal and every actual child is
productive to depth $d$. A scheme is \emph{productive} if its start
term is productive to every depth. This is equivalent to having no
$\bot$ node; finite terminal leaves are permitted.

\paragraph{Inputs.}
Source programs and HORS descriptions are finite syntax. Fix standard
computable injective encodings by natural numbers, with total computable decoders
that return syntax or reject the input.
Validity, closed references, and a specified order bound are decidable
by syntax-directed checks. We consider ordinary languages of encodings:
\[
\begin{aligned}
 \EqH&=\{\langle G,H\rangle:\Valid(G,H)\land\BT(G)=\BT(H)\},\\
 \EqFour&=\{\langle G,H\rangle\in\EqH:
                   \Sigma=\Sigma_2\land\max\{\ord(G),\ord(H)\}\le4\},\\
 \EqProd&=\{\langle G,H\rangle\in\EqFour:
                                  \Prod(G)\land\Prod(H)\},\\
 \EqSafeProd&=\{\langle G,H\rangle\in\EqProd:
                                  \Safe(G)\land\Safe(H)\}.
\end{aligned}
\]
Malformed inputs are rejected. Membership in $\EqProd$ includes
productivity; the input supplies no semantic promise.

\paragraph{Source machines.}
We use the two-counter machine model with undecidable halting
in~\cite[Definition~2 and Theorem~6, pp.~16:3--16:4]{dudenhefner2022}, with instruction
positions reindexed from zero to one. \Cref{sec:overview-candidates}
gives the instruction syntax and execution convention.
A configuration $(q,a,b)\in\N^3$ records the program position and
the two counter values. It halts exactly when $q\notin I_M$.
Write $\to_M$ for the configuration step relation and $\to_M^*$
for its reflexive transitive closure. Our termination notation is
\[
\begin{aligned}
 M\halts&\iff\exists(q,a,b)\in\N^3\;
       [(1,0,0)\to_M^*(q,a,b)\land q\notin I_M],\\
 M\nonhalts&\iff\neg(M\halts).
\end{aligned}
\]
The zero-start halting problem is c.e.-complete.
The effective two-counter simulation gives c.e.-hardness for arbitrary
initial configurations~\cite[Section~2, p.~16:4]{dudenhefner2022}.
A finite prefix initializes the counters, then uses the two-instruction
gadget $\mathsf{inc}_A;\mathsf{dec}_A(j)$ to jump to the relocated
initial position $j$ while restoring both counter values: the decrement
necessarily takes its positive branch. Jump relocation keeps every
halting target out of range, in particular leaving target zero at zero.
This yields the fixed initial state $(1,0,0)$.
Bounded simulation gives c.e.\ membership.

%% file: sections/03-witness.tex
\section{Arithmetic witnesses for candidate executions}\label{sec:witness}

We adapt the bounded-counter guard witness of
Boiret, Pi\'orkowski, and Schmude~\cite[Section~4, pp.~48:12--48:14]{boiret2018}
to the source instruction set in \cref{sec:overview-candidates}.
We first define the candidate unfolding, then assign weights that
distinguish legal prefixes from illegal ones.

\subsection{Candidate unfolding}\label{sec:candidate-semantics}

A candidate follows the program's control instructions but chooses
either outcome of a decrement test without checking its guard.
Its virtual counters range over $\Z$, since a wrong positive choice
at zero produces $-1$. For an active configuration $s=(i,a,b)$,
where $i\in I_M$, and a choice $e\in\{1,2\}$, define
\begin{equation}\label{eq:candidate-successor}
 \delta_e^M(i,a,b)=
 \begin{cases}
  (i+1,a+1,b),&M(i)=\mathsf{inc}_A,\\
  (i+1,a,b+1),&M(i)=\mathsf{inc}_B,\\
  (i+1,a,b),&M(i)=\mathsf{dec}_A(j)\land e=1,\\
  (j,a-1,b),&M(i)=\mathsf{dec}_A(j)\land e=2,\\
  (i+1,a,b),&M(i)=\mathsf{dec}_B(j)\land e=1,\\
  (j,a,b-1),&M(i)=\mathsf{dec}_B(j)\land e=2.
 \end{cases}
\end{equation}
Both choices have the same successor at an increment.
The map is undefined when $i\notin I_M$.
For $M(i)$ acting on counter $X\in\{A,B\}$, let $c$ be that
counter's value in $s$. Its branch condition is
\[
 \gamma_e^M(s)\iff
 \begin{cases}
  \top,&M(i)=\mathsf{inc}_X,\\
  c=0,&M(i)=\mathsf{dec}_X(j)\land e=1,\\
  c>0,&M(i)=\mathsf{dec}_X(j)\land e=2.
 \end{cases}
\]
The successor $\delta_e^M(s)$ is defined independently of
$\gamma_e^M(s)$. On natural-counter configurations, the choices
satisfying $\gamma_e^M(s)$ give exactly the source steps $\to_M$.

The candidate tree has domain $\Cand_M\subseteq\{1,2\}^*$ and
configuration map $s_M:\Cand_M\to\N\times\Z^2$, generated by
\[
\begin{gathered}
 \epsilon\in\Cand_M,\qquad s_M(\epsilon)=(1,0,0),\\[1ex]
 \frac{v\in\Cand_M\quad\pc_M(v)\in I_M\quad e\in\{1,2\}}
      {ve\in\Cand_M\quad s_M(ve)=\delta_e^M(s_M(v))}.
\end{gathered}
\]
Here $\pc_M(v)$ is the first coordinate of $s_M(v)$, and
$\Cand_M$ is the least set closed under these clauses.
Every active node has two children; a node with
$\pc_M(v)\notin I_M$ is a leaf. Thus a finite program may have
an infinite candidate unfolding.
The legal prefixes form $\Legal_M\subseteq\Cand_M$, defined by
\[
\begin{aligned}
 \epsilon&\in\Legal_M,\\
 ve\in\Legal_M&\iff v\in\Legal_M\land\gamma_e^M(s_M(v)),
                         \qquad ve\in\Cand_M.
\end{aligned}
\]
They follow the source execution, with either label available at
increments. The output trees in \cref{eq:model-tree} add observation
children and common halt padding to this candidate tree.

\subsection{Guard weights and halting correspondence}

After $m$ instructions let
\[
 Z_m(X)=\prod_{j=1}^{m}(X-j),\qquad Z_0(X)=1.
\]
A legal prefix has $0\le A,B\le m$, because an instruction increases
each counter by at most one. Also $Z_m(0)=(-1)^m m!\ne0$, while
$Z_m(c)=0$ for $1\le c\le m$.
Each weight update multiplies by a factor determined by the current
instruction $\iota$ on counter $X$, the candidate choice $e$, and
the old counter value $c$:
\begin{equation}\label{eq:weight-factor}
 \phi_m(\iota,e,c)=
 \begin{cases}
  1,&\iota=\mathsf{inc}_X,\\
  Z_m(c),&\iota=\mathsf{dec}_X(j)\land e=1,\\
  c,&\iota=\mathsf{dec}_X(j)\land e=2.
 \end{cases}
\end{equation}
For $ve\in\Cand_M$, take $m=|v|$, $\iota=M(\pc_M(v))$, and
$c$ as the value of its selected counter after $v$. Define
\begin{equation}\label{eq:weight-recurrence}
 W_M(\epsilon)=1,\qquad
 W_M(ve)=W_M(v)\,\phi_m(\iota,e,c).
\end{equation}
Counter and control updates are given by \cref{eq:candidate-successor};
the unselected counter is unchanged.
Every instruction advances the clock to $m+1$ and the polynomial to
$Z_{m+1}$. The multiplier in \cref{eq:weight-factor} uses the
pre-transition counter and polynomial, including before a decrement.

In the register construction of~\cite{boiret2018}, an input word can
name arbitrary transitions. Its weight update is
$r_w\leftarrow r_wT_qT_1T_2$: $T_q$ checks the source control state,
and $T_1,T_2$ check both counter guards. A further state factor in the
output selects the target state. Here candidates follow the finite
control by construction; increments have no guard, and a decrement
tests one counter. We retain exactly that guard's multiplier and
observe weights only at halts. The complete numerical weights therefore
need not coincide with those of~\cite{boiret2018}; the shared ingredients
are $Z_m$, its guard factors, and the following first-error proof.

\begin{lemma}[Witness invariant]\label{lem:witness}
\[
 \forall v\in\Cand_M,\qquad
 W_M(v)\ne0\quad\Longleftrightarrow\quad v\in\Legal_M.
\]
\end{lemma}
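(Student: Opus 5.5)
We argue by induction on the length of $v\in\Cand_M$, and first strengthen the statement with a bounds invariant for legal prefixes. \emph{Claim A}: if $v\in\Legal_M$ and $s_M(v)=(i,a,b)$, then $0\le a,b\le|v|$. The base case $\epsilon$ has $a=b=0$. For the step, take $ve\in\Legal_M$ with $m=|v|$. An increment raises one counter by one, so the bound rises from $m$ to $m+1$. A legal zero choice leaves the counters unchanged. A legal positive choice has $c>0$ by $\gamma_2^M$, so $c-1\ge0$. In every case the other counter is unchanged. Hence every counter of a legal $ve$ lies in $[0,m+1]$.

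The main induction proves both directions of the lemma together. The base case holds because $W_M(\epsilon)=1\ne0$ and $\epsilon\in\Legal_M$. For $ve\in\Cand_M$, write $m=|v|$, $\iota=M(\pc_M(v))$, and let $c$ be the selected counter of $s_M(v)$. By \cref{eq:weight-recurrence}, $W_M(ve)=W_M(v)\phi_m(\iota,e,c)$, and $\Z$ has no zero divisors. So $W_M(ve)\ne0$ exactly when $W_M(v)\ne0$ and $\phi_m(\iota,e,c)\ne0$. By the induction hypothesis the first condition is equivalent to $v\in\Legal_M$. By the definition of $\Legal_M$ it therefore remains to show:
\[
 v\in\Legal_M\implies\bigl(\phi_m(\iota,e,c)\ne0\iff\gamma_e^M(s_M(v))\bigr).
\]
If $v\notin\Legal_M$, both sides of the lemma for $ve$ are already false, whatever the factor is. This is how the argument handles negative virtual counters after an earlier error: the factor $\phi_m$ is never examined there.

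Now fix $v\in\Legal_M$; Claim A gives $0\le c\le m$. We split on \cref{eq:weight-factor}.
\begin{itemize}
\item For an increment, $\phi=1\ne0$ and $\gamma=\top$.
\item For a decrement with $e=2$, $\phi=c$, which is nonzero exactly when $c>0$, since $c\ge0$. This is exactly $\gamma_2^M$.
\item For a decrement with $e=1$, $\phi=Z_m(c)$. If $c=0$, then $Z_m(0)=(-1)^m m!\ne0$. If $1\le c\le m$, the factor $X-c$ occurs in $Z_m$, so $Z_m(c)=0$. Since $c\in[0,m]$, we get $Z_m(c)\ne0\iff c=0$, which is $\gamma_1^M$.
\end{itemize}
This completes the induction.

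The only delicate point is that the pre-transition index $m=|v|$ must match the range $[0,m]$ in Claim A. A false zero guess with $c=m+1$ would escape $Z_m$, and the off-by-one check is the length-based bound in Claim A. That check works because \cref{eq:weight-recurrence} evaluates $Z_{|v|}$ at the old counter value.
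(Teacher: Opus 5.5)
Your proof is correct and follows essentially the paper's argument. It uses the same counter bound $0\le c\le m$ on legal prefixes, the same case analysis of $\phi_m$ (with $Z_m(0)\ne0$ and $Z_m(c)=0$ for $1\le c\le m$), and the same propagation of zero after an error. The only difference is packaging: you prove the biconditional by a single induction on $|v|$, using that $\Z$ has no zero divisors, and your Claim A makes explicit the bound the paper states just before the lemma; the paper instead argues legal $\Rightarrow$ nonzero directly and illegal $\Rightarrow$ zero via the first erroneous choice.
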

\begin{proof}
On a legal prefix, increments preserve the weight, positive choices
multiply it by a positive integer, and zero choices multiply it by
$Z_m(0)\ne0$. Thus all legal prefixes have nonzero weight.
If a prefix is illegal, consider its first erroneous choice. All earlier
choices were legal, so the current tested counter lies in $[0,m]$.
A false zero choice has $1\le c\le m$ and contributes $Z_m(c)=0$.
A false positive choice has $c=0$ and contributes zero. Subsequent
updates only multiply the weight or leave it unchanged, so it remains
zero, regardless of any later negative virtual counters.
\end{proof}

\begin{lemma}[Control correspondence]\label{lem:live-halt}
\[
 M\halts\quad\Longleftrightarrow\quad
 \exists v\in\Cand_M\;[\pc_M(v)\notin I_M\land W_M(v)\ne0].
\]
\end{lemma}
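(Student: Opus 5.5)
By \cref{lem:witness}, the right-hand side is equivalent to $\exists v\in\Legal_M\;[\pc_M(v)\notin I_M]$. It therefore suffices to show that $M\halts$ holds exactly when some legal candidate prefix reaches an out-of-range program counter. The link is a simulation lemma relating legal prefixes to source runs, proved by induction on prefix length.

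\emph{Simulation claim.} For every $v\in\Legal_M$, the configuration $s_M(v)$ lies in $\N^3$ and $(1,0,0)\to_M^*s_M(v)$. Conversely, whenever $(1,0,0)\to_M^n(q,a,b)$, some $v\in\Legal_M$ with $|v|=n$ has $s_M(v)=(q,a,b)$. Both directions use the remark after the definition of $\gamma_e^M$: on natural-counter configurations, the successors $\delta_e^M(s)$ with $\gamma_e^M(s)$ true are exactly the source steps $\to_M$.

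For the forward direction, take $ve\in\Legal_M$. By induction $s_M(v)\in\N^3$ is reachable, and $\pc_M(v)\in I_M$ because $ve\in\Cand_M$. Since $\gamma_e^M(s_M(v))$ holds, $\delta_e^M(s_M(v))$ is a source step. It also stays in $\N^3$, since a positive choice decrements only a counter known to be positive.

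For the converse, induct on $n$. Given a legal $v$ at the $n$-th configuration $s$, which is active because it has a successor, the unique source step from $s$ is $\delta_e^M(s)$ for a choice $e$ satisfying $\gamma_e^M(s)$. For an increment either label works; for a test, exactly one label does. Then $ve\in\Cand_M$, since $\pc_M(v)\in I_M$, and $ve\in\Legal_M$ by the legality clause.

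With the claim in hand, both directions of the lemma follow. If $M\halts$, the converse part yields a legal $v$ with $\pc_M(v)=q\notin I_M$, and \cref{lem:witness} gives $W_M(v)\ne0$. If some $v$ has $\pc_M(v)\notin I_M$ and $W_M(v)\ne0$, then $v$ is legal, and the forward part gives $(1,0,0)\to_M^*s_M(v)$ with halting position $\pc_M(v)$.

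The main obstacle is bookkeeping rather than substance. The candidate map $s_M$ has codomain $\N\times\Z^2$, so naturality of the counters must be carried through the forward induction. The correspondence between $\delta_e^M$ under the guard and the source step must also match the convention of \cref{sec:overview-candidates}, including jump targets out of range and $\mathsf{dec}$ advancing to $i+1$ when the counter is zero.
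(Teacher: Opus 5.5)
Your proof is correct and follows essentially the same route as the paper: both use \cref{lem:witness} to identify nonzero weight with legality, then match legal candidate prefixes with source runs by induction on length. In one direction you choose the actual guard at each test, and in the other you read each legal choice as a source step; your version only makes explicit the naturality of the counters and the activity of intermediate configurations, which the paper leaves implicit.
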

\begin{proof}
Given a machine execution, choose its actual guard at each test and
either label at an increment. Induction on its length identifies the
candidate program counter and virtual counters with the source
configuration; \cref{lem:witness} gives nonzero weight. A source halt
therefore supplies the required candidate.
Conversely, nonzero weight makes every choice legal. Starting from
natural counters zero, induction on the candidate steps shows that an
increment is the source increment, a zero choice is its zero case,
and a positive choice subtracts one from a positive natural counter
and takes its jump. The final out-of-range control position has no
source successor, so this candidate is a halting source execution.
\end{proof}

%% file: sections/04-compiler.tex
\section{The compiler and its correctness}\label{sec:compiler}

The first compiler stage inserts fixed rule templates into a table
indexed by the source instructions. These rules implement the arithmetic
and control updates over $\Sigma_0$ described in \cref{sec:overview}.
The final pass in \cref{sec:binary-target} realizes the constructors
over $\Sigma_2$.
Every rule has an applicative ground body. Tuples, primed terms, and
side selection below are textual abbreviations expanded before the
rules are emitted.

\subsection{Arithmetic rules and polynomial updates}\label{sec:rules}

For a natural pair $C=(u,v)$ write $\val(C)=u-v\in\Z$, and put
\[
 \begin{aligned}
 \operatorname{inc}(u,v)&=(u+1,v),&
 \operatorname{dec}(u,v)&=(u,v+1),\\
 (u,v)\otimes(r,s)&=(ur+vs,us+vr).&&
 \end{aligned}
\]
Thus $\val(\operatorname{inc}C)=\val(C)+1$,
$\val(\operatorname{dec}C)=\val(C)-1$, and $\val(C\otimes D)=\val(C)\,\val(D)$.
All components remain natural numbers, including on illegal candidates.

Set $\tyC=o\to o$, $\NN=\tyC\to\tyC$, and
$\PP=\NN\to\NN\to\NN$. The numerical rules are
\begin{align*}
 \mathsf{Zero}\ f\ x&\to x,\\
 \mathsf{One}\ f\ x&\to f\,x,\\
 \mathsf{Succ}\ n\ f\ x&\to f\,(n\,f\,x),\\
 \mathsf{Add}\ m\ n\ f\ x&\to m\,f\,(n\,f\,x),\\
 \mathsf{Mul}\ m\ n\ f\ x&\to m\,(n\,f)\,x,\\
 \mathsf{Pos}\ u\ v\ r\ s\ f\ x
    &\to\mathsf{Add}\,(\mathsf{Mul}\ u\ r)\,
                       (\mathsf{Mul}\ v\ s)\,f\,x,\\
 \mathsf{Neg}\ u\ v\ r\ s\ f\ x
    &\to\mathsf{Add}\,(\mathsf{Mul}\ u\ s)\,
                       (\mathsf{Mul}\ v\ r)\,f\,x .
\end{align*}
Here $f:\tyC$ and $x:o$; every right-hand side has type $o$.
Juxtaposition is left-associative.

Define a pair of natural-valued functions $C_m=(P_m^+,P_m^-)$ by
\begin{equation}\label{eq:pair-poly}
 C_0(u,v)=(1,0),\qquad
 C_{m+1}(u,v)=(u,v+1)\otimes C_m(u,v+1).
\end{equation}
Induction on $m$ gives
\begin{equation}\label{eq:pair-poly-value}
 \forall u,v\in\N,\qquad P_m^+(u,v)-P_m^-(u,v)=Z_m(u-v).
\end{equation}
The successor difference is
$(u-v-1)Z_m(u-v-1)=Z_{m+1}(u-v)$, using
\begin{equation}\label{eq:poly-shift}
 Z_{m+1}(x)=(x-1)Z_m(x-1).
\end{equation}
This identity holds for all integers $x$, not only legal counter values.

The seed nonterminals have type $\PP$ and rules
$\mathsf{Seed}_+\ u\ v\ f\ x\to\mathsf{One}\ f\ x$ and
$\mathsf{Seed}_-\ u\ v\ f\ x\to\mathsf{Zero}\ f\ x$.
The update nonterminals have type $\PP\to\PP\to\PP$:
\begin{equation}\label{eq:next-rules}
\begin{aligned}
 \mathsf{Next}_+\ p\ q\ u\ v\ f\ x
   &\to \mathsf{Pos}\ u\ (\mathsf{Succ}\ v)\\[-2pt]
   &\hspace{1.5em}(p\ u\ (\mathsf{Succ}\ v))\,
                  (q\ u\ (\mathsf{Succ}\ v))\ f\ x,\\
 \mathsf{Next}_-\ p\ q\ u\ v\ f\ x
   &\to \mathsf{Neg}\ u\ (\mathsf{Succ}\ v)\\[-2pt]
   &\hspace{1.5em}(p\ u\ (\mathsf{Succ}\ v))\,
                  (q\ u\ (\mathsf{Succ}\ v))\ f\ x .
\end{aligned}
\end{equation}
Here $p,q:\PP$, $u,v:\NN$, $f:\tyC$, and $x:o$.
The right-hand sides have type $o$. Each occurrence of $p$ or $q$ is
an ordinary application to two numeral terms.

\subsection{Compiling the control flow}\label{sec:control-rules}

Fix a side $\diamond\in\{L,R\}$.
\Cref{fig:control-compiler} gives a control-rule template for each
source instruction. At position $i$, $M(i)$ selects the rule defining
$\mathsf{F}_i$. These selected rules form $\mathcal R_{\mathrm{ctrl}}(M)$.
The program translation $\llbracket M\rrbracket^\diamond$ assembles
them with the arithmetic library of \cref{sec:rules} and the
halt, padding, and start rules to obtain $G_M^\diamond$.
All abbreviations are expanded before the rules are emitted.

\input{figures/control-compiler}

Each source position contributes one rule. The common root $b$ has
observation child $z$ and continuation
children $1,2$. An increment duplicates its successor call. A decrement
test generates both guesses: child $1$ keeps the selected counter and
advances to the next instruction; child $2$ increments its negative
component and jumps to the specified position.

\Needspace{6\baselineskip}
Every continuation receives $p',q'$, so the polynomial advances on
every instruction. The zero guess multiplies the weight by the
\emph{old} polynomial evaluated at the \emph{old} counter pair.
The positive guess uses the \emph{old} counter pair as its multiplier,
before the negative component is incremented. These simultaneous
updates are explicit in the four control-rule templates.

Normalization sends every out-of-range position to $0$, represented
by $\mathsf{F}_0$. The choice $w_L=h$, $w_R=k$ is fixed by the output side:
the two halt observations are $h\,a\,z$ and $k\,a\,z$, and both use
$\mathsf{K}$ for their continuation children. Only this halt rule differs
between the two schemes. The start rule supplies the polynomial seeds,
zero counter components, and weight components $(\mathsf{One},\mathsf{Zero})$.

\PaperFloatBarrier

\subsection{Well-formedness, order, and safety}\label{sec:safety-audit}

The construction contains eleven arithmetic nonterminals, $\mathsf{S},\mathsf{K}$,
and the $L+1$ control nonterminals. The source instruction list selects
one control template per state; no source counter or execution is
evaluated during compilation.

\begin{lemma}[Syntactic compilation]\label{lem:typing}
The construction produces two finite, closed, well-typed deterministic
HORS over $\Sigma_0$, of order at most four.
\end{lemma}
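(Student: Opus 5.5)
The proof is a finite syntactic audit carried out rule by rule. I would first fix the type of every nonterminal. The arithmetic library has eleven nonterminals:
$\mathsf{Zero},\mathsf{One}:\NN$;
$\mathsf{Succ}:\NN\to\NN$;
$\mathsf{Add},\mathsf{Mul}:\NN\to\NN\to\NN$;
$\mathsf{Pos},\mathsf{Neg}:\NN^4\to\NN$;
$\mathsf{Seed}_\pm:\PP$;
and $\mathsf{Next}_\pm:\PP\to\PP\to\PP$.
The remaining nonterminals are $\mathsf{S},\mathsf{K}:o$ and $\mathsf{F}_i:\PP\to\PP\to\NN^6\to o$ for $i\in\{0,\ldots,L\}$. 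With these declarations, the eta-expanded arity of each rule matches its left-hand side. For instance, $\mathsf{Pos}\,u\,v\,r\,s\,f\,x$ takes four numerals, then $f:\tyC$ and $x:o$. Closedness follows because every body refers only to its own parameters, the three terminals $z,a,b$, and declared nonterminals. The only nonterminal names produced by the control templates are $\mathsf{F}_{\widehat j}$ and $\mathsf{F}_{\widehat{i+1}}$, and normalization puts both indices in $\{0,\ldots,L\}$. Determinism holds because each nonterminal receives exactly one rule: one control rule per position from \cref{fig:control-compiler}, a single side-dependent halt rule for $\mathsf{F}_0$, and one rule for each fixed helper. Finiteness holds because the rule set consists of $L+1$ template instances plus a constant number of fixed rules.

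Next I would typecheck each body bottom-up, using $\NN=\tyC\to\tyC$ and $\PP=\NN\to\NN\to\NN$.
\begin{itemize}
\item In $\mathsf{Mul}$, the subterm $n\,f:\tyC$, so $m\,(n\,f)\,x:o$.
\item In $\mathsf{Pos}$ and $\mathsf{Neg}$, the subterms $\mathsf{Mul}\,u\,r$ are partial applications of type $\NN$, so $\mathsf{Add}\,(\cdot)\,(\cdot)\,f\,x:o$.
\item In $\mathsf{Next}_\pm$, the subterm $\mathsf{Succ}\,v:\NN$, so $p\,u\,(\mathsf{Succ}\,v):\NN$ and the whole body has type $o$.
\item In the control rules, $p'=\mathsf{Next}_+\,p\,q:\PP$, $p\,u\,v:\NN$, and $\mathsf{Pos}\,h\,k\,(p\,u\,v)\,(q\,u\,v):\NN$, matching the $\mathsf{F}$ signature. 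The constructor $b$ receives three ground arguments.
\item At the halt, $h\,a\,z:o$ because $a:\tyC$ and $z:o$. The rules for $\mathsf{K}$ and $\mathsf{S}$ typecheck directly, with $\mathsf{S}$ applying $\mathsf{F}_1$ (or $\mathsf{F}_0$ when $L=0$) to the seeds, to $\mathsf{Zero}$ four times, and then to $\mathsf{One},\mathsf{Zero}$.
\end{itemize}

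Finally I would compute orders with \cref{eq:type-order}. We have $\ord(\tyC)=1$, $\ord(\NN)=2$ and $\ord(\PP)=3$. The numeral helpers have order $3$ because $\NN\to\cdots$ takes order-two arguments, and $\mathsf{Seed}_\pm$ have order $3$. Both $\mathsf{Next}_\pm$ and $\mathsf{F}_i$ have order $4$ because their first argument has type $\PP$ of order three. $\mathsf{S}$ and $\mathsf{K}$ have order $0$. The maximum is therefore four.

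The only delicate step is not mathematical: I have to make sure that the $\mathsf{inc}_B$ and $\mathsf{dec}_B$ templates, and the positions at which the abbreviations $p',q'$ are expanded, produce well-typed arguments in the correct positions. I would handle this by checking one generic template argument per argument slot rather than rechecking each template separately.
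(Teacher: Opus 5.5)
Your proposal is correct and follows essentially the same route as the paper: one rule per nonterminal with normalized state indices in $\{0,\ldots,L\}$, the same type table, bottom-up typing of the $\mathsf{Mul}$, $\mathsf{Pos}/\mathsf{Neg}$, $\mathsf{Next}_\pm$, control and halt bodies, and an order computation with maximum four. One small slip, which does not affect the bound: $\mathsf{Zero}$ and $\mathsf{One}$ have type $\NN$ and hence order two, not three; only $\mathsf{Succ}$, $\mathsf{Add}$, $\mathsf{Mul}$, $\mathsf{Pos}$, $\mathsf{Neg}$ and the seeds have order three.
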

\begin{proof}
There are $L+1$ states and thirteen other nonterminals, each with one
rule. All referenced state indices have been normalized into $0,\ldots,L$.
The type audit is
\[
\begin{array}{lll}
 \mathsf{Zero},\mathsf{One}:\NN & &2\\
 \mathsf{Succ}:\NN\to\NN &&3\\
 \mathsf{Add},\mathsf{Mul}:\NN\to\NN\to\NN &&3\\
 \mathsf{Pos},\mathsf{Neg}:\NN^4\to\NN &&3\\
 \mathsf{Seed}_+,\mathsf{Seed}_-:\PP &&3\\
 \mathsf{Next}_+,\mathsf{Next}_-:\PP\to\PP\to\PP &&4\\
 \mathsf{F}_i:\PP\to\PP\to\NN^6\to o &&4\\
 \mathsf{S},\mathsf{K}:o &&0 .
\end{array}
\]
The last column is the order of the \emph{whole} nonterminal type.
For example, in \cref{eq:next-rules}, $\mathsf{Succ}\ v:\NN$,
$p\,u\,(\mathsf{Succ}\ v):\NN$, and the saturated $\mathsf{Pos}$
or $\mathsf{Neg}$ application has type $o$.
In the instruction clauses of \cref{fig:control-compiler},
$p',q':\PP$ and every remaining
state argument has type $\NN$, so each continuation has type $o$.
In its halt clause, $h\,a\,z$ and $k\,a\,z$ have type $o$.
The numeral parameter context for $\mathsf{Add}$ and $\mathsf{Mul}$
is $m:\NN,n:\NN,f:\tyC,x:o$.
For addition, $n\,f\,x:o$ and $m\,f:\tyC$, giving a ground body.
For multiplication, $n\,f:\tyC$, $m\,(n\,f):\tyC$, and application
to $x$ again yields $o$.
For $\mathsf{Pos}$ and $\mathsf{Neg}$, each partial multiplication
has type $\NN$; their sum has type $\NN$ before the final $f,x$
applications. The seed rules ignore their first two $\NN$ arguments
and apply a numeral of type $\NN$ to $f,x$.
For a polynomial update, the parameter context has two $\PP$
arguments followed by two $\NN$ arguments, $f$, and $x$.
Its two polynomial calls have type $\NN$, so the component
multiplication rule gives the required ground result.
\end{proof}

\begin{lemma}[Safety of the compiler]\label{lem:safety}
For every $M$ and side $\diamond$, the compiled scheme $G_M^\diamond$
is safe.
\end{lemma}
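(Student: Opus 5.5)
The plan is to check the two conditions of the safety definition from \cite[Definition~1]{hague2017} rule by rule: homogeneity of every type that occurs, and the operand inequality $\ord(s)\le\ord(x)$ for every rule parameter $x$ occurring in an operand $s$. By \cref{lem:typing} there are only finitely many rule shapes: the eleven arithmetic rules, $\mathsf{S}$, $\mathsf{K}$, the halt rule, and the four control templates of \cref{fig:control-compiler}. Safety is a per-rule property, and the control templates differ across $M$ only in the indices $\widehat{i+1},\widehat j$, which are nonterminal constants. So it suffices to audit each template once, uniformly in $M$ and $\diamond$.

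\emph{Step 1 (homogeneity).} First we check the base types. $\tyC=o\to o$ and $o$ are trivially homogeneous. $\NN=\tyC\to o\to o$ has argument orders $1\ge0$. $\PP=\NN\to\NN\to\tyC\to o\to o$ has orders $2,2,1,0$. The nonterminal types then have argument orders $3,3,2,2,1,0$ for $\mathsf{Next}_\pm$, and $3,3,2,\dots,2$ for $\mathsf{F}_i$. Every subterm type that can appear is a suffix of one of these types, since it is $\NN$, $\PP$, $\tyC$, $o$, or a partial application of one of the listed nonterminals. Suffixes of homogeneous types are homogeneous, so this step is immediate.

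\emph{Step 2 (operands in the arithmetic library).} Next we enumerate the application operands in each body. In $\mathsf{Succ}$, $\mathsf{Add}$ and $\mathsf{Mul}$ the operands are $f,x$, $n\,f\,x:o$, $n\,f:\tyC$ and similar terms. Each has order at most $1$ and contains only parameters of order $\ge$ its own; for instance $n\,f$ has order $1$ and its parameters are $n$ (order 2) and $f$ (order 1). In $\mathsf{Pos}$ and $\mathsf{Neg}$ the operands $\mathsf{Mul}\,u\,r:\NN$ have order $2$, and their parameters have order $2$. In $\mathsf{Next}_\pm$ the operands are $\mathsf{Succ}\,v$ (order 2, parameter $v$ of order 2), $p\,u\,(\mathsf{Succ}\,v)$ and $q\,u\,(\mathsf{Succ}\,v)$ (order 2, parameters $p,q$ of order 3 and $u,v$ of order 2), and $f,x$. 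The order-three prefix $p\,u$ sits in function position, so it imposes no constraint. The seed rules have operands $f,x$ only.

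\emph{Step 3 (control, halt, start).} In each control template the operands are of four kinds. The first is $p'=\mathsf{Next}_+\,p\,q$ and $q'$, of order $3$, whose only parameters are $p,q$ of order $3$. The second is the numeral arguments $u,v,r,s$ and $\mathsf{Succ}\,v$ (or $\mathsf{Succ}$ applied to the relevant component), of order $2$. The third is the weight operands $\mathsf{Pos}\,h\,k\,(p\,u\,v)\,(q\,u\,v)$ and $\mathsf{Pos}\,h\,k\,u\,v$ and their $\mathsf{Neg}$ versions, of order $2$, containing parameters of orders $2$ and $3$. The fourth is the inner operands $p\,u\,v$, of order $2$. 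The $b$-arguments $z$ and the continuation calls have order $0$. In the halt rule, $a$ is a terminal and $h\,a\,z$ has order $0$. The start rule and $\mathsf{K}$ contain no parameters. Each case satisfies the inequality.

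The main obstacle is making sure the enumeration of operands is complete. In particular, every nested argument inside the abbreviations $p',q'$ and inside the $\mathsf{Pos}/\mathsf{Neg}$ weight terms must be treated as an operand after expansion. Once that is done, the one delicate point is the order-three operand $\mathsf{Next}_\pm\,p\,q$. It is exactly here that the shift recurrence matters: no order-two step-counter parameter occurs in it. Otherwise the inequality would fail, as discussed in \cref{sec:overview-order}.
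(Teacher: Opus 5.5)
Your proposal is correct and follows essentially the paper's own proof. It checks homogeneity through codomain suffixes of a few homogeneous types, then enumerates the compound operand families ($\mathsf{Next}_\pm\,p\,q$, $\mathsf{Succ}\,u$, $\mathsf{Mul}\,u\,r$, $p\,u\,v$, $p\,u\,(\mathsf{Succ}\,v)$, the two $\mathsf{Pos}/\mathsf{Neg}$ weight terms, $n\,f$) after expanding the abbreviations, and exempts the function-position prefix $p\,u$. The only thing to tighten is Step 1: state homogeneity explicitly for every nonterminal type, not just $\mathsf{Next}_\pm$ and $\mathsf{F}_i$. The paper does this by noting that all arithmetic types expand to $\PP^r\to\NN^s\to\tyC\to o\to o$, and it also covers the terminal types.
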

\begin{proof}
We use the syntactic condition of~\cite[Definition~1]{hague2017}.
The types $o,\tyC,\NN,\PP$ are homogeneous, of orders $0,1,2,3$.
Every arithmetic type in the preceding audit expands to
$\PP^r\to\NN^s\to\tyC\to o\to o$ for some $r,s\in\N$.
Its parameter orders are therefore nonincreasing. The same holds for
the state type $\PP^2\to\NN^6\to o$, the ground types of $\mathsf{S},\mathsf{K}$,
and terminal types with only ground arguments.
Every subterm type is a codomain
suffix of the type of its head symbol, which is a parameter,
nonterminal, or terminal. Those suffixes remain homogeneous.

The table lists every family of compound nonground operands after
expansion of the rule abbreviations, up to renaming parameters of the
same type. Here $p,q:\PP$, $u,v,h,k,n:\NN$, and $f:\tyC$.
Each $\mathsf{Pos}$ row also covers the corresponding $\mathsf{Neg}$
term. The last column gives the set of orders of parameters occurring
in the operand.
\begin{center}
\begin{tabular}{@{}lcc@{}}
\toprule
Operand $s$ & $\ord(s)$ & Parameter orders\\
\midrule
$\mathsf{Next}_\pm\,p\,q$ & $3$ & $\{3\}$\\
$\mathsf{Succ}\,u$, $\mathsf{Mul}\,u\,v$ & $2$ & $\{2\}$\\
$p\,u\,v$, $q\,u\,v$ & $2$ & $\{2,3\}$\\
$p\,u\,(\mathsf{Succ}\,v)$, $q\,u\,(\mathsf{Succ}\,v)$ & $2$ & $\{2,3\}$\\
$\mathsf{Pos}\,h\,k\,u\,v$ & $2$ & $\{2\}$\\
$\mathsf{Pos}\,h\,k\,(p\,u\,v)\,(q\,u\,v)$ & $2$ & $\{2,3\}$\\
$n\,f$ & $1$ & $\{1,2\}$\\
\bottomrule
\end{tabular}
\end{center}
The same numeric families cover the $B$-counter parameters $r,s$.
Ground operands impose only a lower bound of zero. A single parameter
has exactly its own order, and a closed operand has no parameters.
The start rule uses closed seeds and numerals. Halt observations and
padding add only ground operands, the closed terminal $a$, and $\mathsf{K}$.
This checks the condition at every application node in every rule.

For example, in $p\,u\,(\mathsf{Succ}\ v)$ the prefix $p\,u$ has
order three and contains the order-two parameter $u$. It is the
function child of the outer application, so it has no operand
inequality of its own. The whole polynomial application has order
two, and its uses as an operand are safe. Applying the inequality
to every application prefix would impose a different restriction.
\end{proof}

\subsection{A representation invariant that supports application}

Multiplication reuses a numeral at a functional argument of the form
$n\,f$. Correctness only at the observation arguments $a,z$ would leave
this application unproved. We therefore require correctness at every
represented function and ground argument.

Fix one compiled scheme $G$. Let $V_o$ be the set of partial observation
maps $T:\N^*\rightharpoonup\Sigma_0$, writing $T(w)=\dagger$ for
$w\notin\dom(T)$. These maps record terminal observations only.
We allow arbitrary partial maps here; \cref{sec:operational-correspondence}
shows that the constructed output maps satisfy the rank conditions and
define productive trees. Set
$V_{\sigma\to\tau}=V_\tau^{V_\sigma}$.
For a closed term $t:\sigma$ and $v\in V_\sigma$, define a logical
relation by
\begin{equation}\label{eq:arithmetic-relation}
\begin{aligned}
 t\Vdash_o T
   &\iff \forall w,a\;[\Obs_G(t,w,a)\leftrightarrow T(w)=a],\\
 t\Vdash_{\sigma\to\tau}F
   &\iff \forall u,U\;[u\Vdash_\sigma U
                          \ \Longrightarrow\ t\,u\Vdash_\tau F(U)].
\end{aligned}
\end{equation}
In the second line $u$ ranges over closed terms of type $\sigma$.
Ground realization is preserved when a head unfolding is
performed in reverse: if $t\to_h t'$ and $t'\Vdash_o T$, then
$t\Vdash_o T$, by determinism and the definition of observations.

For any closed $t:\NN$ and $n\in\N$, define $\Num(t,n)$ to mean
$t\Vdash_\NN(\lambda f\,X.f^n(X))$.
The term $t$ may contain nonterminals and partial applications, such as
$\mathsf{Succ}\,\mathsf{Zero}$; its numerical behavior is specified
under further application. The observation $t\,a\,z$ is one instance
of this requirement.

\begin{lemma}[Church arithmetic]\label{lem:numerals}
$\Num(\mathsf{Zero},0)$ and $\Num(\mathsf{One},1)$ hold.
If $\Num(t,m)$ and $\Num(u,n)$, then
\[
 \Num(\mathsf{Succ}\ t,m+1),\quad
 \Num(\mathsf{Add}\ t\ u,m+n),\quad
 \Num(\mathsf{Mul}\ t\ u,mn).
\]
For closed terms $t_1,t_2,t_3,t_4:\NN$ and
$\alpha,\beta,\gamma,\delta\in\N$, if
\[
 \Num(t_1,\alpha)\land\Num(t_2,\beta)\land
 \Num(t_3,\gamma)\land\Num(t_4,\delta),
\]
then
\[
\begin{aligned}
 \Num(\mathsf{Pos}\,t_1\,t_2\,t_3\,t_4,
                   \alpha\gamma+\beta\delta),\\
 \Num(\mathsf{Neg}\,t_1\,t_2\,t_3\,t_4,
                   \alpha\delta+\beta\gamma).
\end{aligned}
\]
The two represented numbers are the coordinates of the signed-pair
product $(\alpha,\beta)\otimes(\gamma,\delta)$.
Moreover, $\Num(t,n)$ implies $t\,a\,z\Vdash_o a^nz$.
\end{lemma}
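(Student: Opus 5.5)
The plan is to unfold the logical relation \cref{eq:arithmetic-relation} at type $\NN=\tyC\to o\to o$. The statement $\Num(t,n)$ says: for all closed $f:\tyC$ with $f\Vdash_\tyC F$ and all closed $x:o$ with $x\Vdash_o X$, we have $t\,f\,x\Vdash_o F^n(X)$. Each clause will be proved by fixing such $f,F,x,X$, performing one head step of the relevant rule, and using the stated backward closure: if $t\to_h t'$ and $t'\Vdash_o T$, then $t\Vdash_o T$. This reduces every claim to a realization fact about the unfolded body.

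First the base cases. $\mathsf{Zero}\,f\,x\to_h x$, and $x\Vdash_o X=F^0(X)$. Likewise $\mathsf{One}\,f\,x\to_h f\,x$, and $f\,x\Vdash_o F(X)$ by the arrow clause for $f$. For successor, $\mathsf{Succ}\,t\,f\,x\to_h f\,(t\,f\,x)$. Here $\Num(t,m)$ gives $t\,f\,x\Vdash_o F^m(X)$, and then the hypothesis on $f$ gives $F^{m+1}(X)$. For addition, $\mathsf{Add}\,t\,u\,f\,x\to_h t\,f\,(u\,f\,x)$. We first realize $u\,f\,x$ by $F^n(X)$ and then instantiate $\Num(t,m)$ with ground argument $u\,f\,x$, obtaining $F^{m+n}(X)$.

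Multiplication is the step that genuinely needs the higher-type formulation. We have $\mathsf{Mul}\,t\,u\,f\,x\to_h t\,(u\,f)\,x$. One first shows $u\,f\Vdash_\tyC F^n$: for any $y\Vdash_o Y$, the definition of $\Num(u,n)$ gives $u\,f\,y\Vdash_o F^n(Y)$, which is exactly the arrow clause. Then $\Num(t,m)$, applied with the function $u\,f$ and the value $F^n$, yields $(F^n)^m(X)=F^{mn}(X)$.

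For $\mathsf{Pos}$ and $\mathsf{Neg}$, I will not argue through ground behaviour. Instead I will note that $\mathsf{Pos}\,t_1\,t_2\,t_3\,t_4\,f\,x\to_h \mathsf{Add}\,(\mathsf{Mul}\,t_1\,t_3)\,(\mathsf{Mul}\,t_2\,t_4)\,f\,x$. The already-proved clauses give $\Num(\mathsf{Add}(\ldots),\alpha\gamma+\beta\delta)$, and that statement, instantiated at $f,x$, is precisely what the backward closure needs. $\mathsf{Neg}$ is handled symmetrically with $t_3$ and $t_4$ swapped.

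For the final claim, it suffices to check $a\Vdash_\tyC(\lambda T.\,a\cdot T)$ and $z\Vdash_o z$. Here $a\cdot T$ is the observation map with root $a$ and child $0$ given by $T$, and $z$ denotes the map $\{\epsilon\mapsto z\}$. Both follow directly from the definition of $\Obs_G$, since a terminal-headed term exposes itself in zero steps. The fact $a\cdot T$ is then just the terminal case of that definition with child index $0$. Instantiating $\Num(t,n)$ at these gives $t\,a\,z\Vdash_o a^nz$, with $a^nz$ read as an observation map.

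The only place needing care is the multiplication step, where a numeral is fed the composite term $u\,f$. The arrow clause is what makes this work. I expect no real obstacle beyond stating the backward-closure steps uniformly.
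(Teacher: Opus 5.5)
Your proposal is correct and matches the paper's proof. Like the paper, you fix related arguments $f\Vdash_{\tyC}F$ and $x\Vdash_o X$, unfold each saturated arithmetic call once, and obtain $u\,f\Vdash_{\tyC}F^n$ for multiplication; you then derive the $\mathsf{Pos}/\mathsf{Neg}$ clauses from the $\mathsf{Add}$ and $\mathsf{Mul}$ clauses and finish with the constructor realizations of $a$ and $z$, only spelling out the backward-closure steps more explicitly.
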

\begin{proof}
Use the arithmetic rules of \cref{sec:rules} with arbitrary related
arguments $f\Vdash_{\tyC}F$ and $x\Vdash_o X$.
The zero and one rules return $x$ and $f\,x$; the successor rule
returns $f(t\,f\,x)$, realizing $F^{m+1}(X)$.
For addition the body $t\,f\,(u\,f\,x)$ realizes
$F^m(F^n(X))=F^{m+n}(X)$.
For multiplication, \cref{eq:arithmetic-relation} first gives
$u\,f\Vdash_{\tyC}F^n$. Applying the realization of $t$ to this
functional argument and then to $x$ shows that
$t\,(u\,f)\,x$ realizes $(F^n)^m(X)=F^{mn}(X)$.
One head unfolding of each saturated arithmetic call gives the
corresponding conclusion. For $\mathsf{Pos}$, multiplication gives
numerals for $\alpha\gamma$ and $\beta\delta$, and addition gives
$\alpha\gamma+\beta\delta$. The $\mathsf{Neg}$ rule similarly uses
$\alpha\delta$ and $\beta\gamma$.
Finally, the terminal $a$ realizes the unary constructor on observation
maps and $z$ realizes the one-node tree, giving the observation claim;
\cref{eq:constructor-maps} below defines these constructor maps explicitly.
\end{proof}

To verify the polynomial update, we relate the stored HORS terms to
the mathematical functions $P_m^+,P_m^-$.
For closed $p,q:\PP$, the representation predicate $\Pol_m(p,q)$
requires that their applications to any numeral representations of
$u,v$ represent the corresponding function values.
Here $t,s$ range over closed $\NN$-terms:
\begin{equation}\label{eq:polynomial-realization}
\begin{aligned}
 \Pol_m(p,q)\iff {}&\forall u,v\in\N\;\forall t,s:\NN,\\
 &\bigl(\Num(t,u)\land\Num(s,v)\bigr)\Longrightarrow\\
 &\quad\bigl(\Num(p\,t\,s,P_m^+(u,v))
                 \land\Num(q\,t\,s,P_m^-(u,v))\bigr).
\end{aligned}
\end{equation}
The index $m$ specifies the polynomial after $m$ candidate steps.
The next lemma shows that the two $\mathsf{Next}$ applications preserve
this relation at step $m+1$.
\begin{lemma}[Polynomial realization]\label{lem:polynomial}
The seeds satisfy $\Pol_0(\mathsf{Seed}_+,\mathsf{Seed}_-)$.
The update property is
\[
 \Pol_m(p,q)\ \Longrightarrow\
 \Pol_{m+1}(\mathsf{Next}_+\,p\,q,\mathsf{Next}_-\,p\,q).
\]
\end{lemma}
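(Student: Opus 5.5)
The plan is to argue directly from the definition \cref{eq:polynomial-realization} and \cref{lem:numerals}, using only that ground realization is preserved under reverse head unfolding. Realization at type $\NN$ is checked pointwise: to establish $\Num(t',n)$ for a term $t'$ whose saturated application $t'\,f\,x$ head-reduces in one step to a body $b$, it suffices to show that $b\Vdash_o F^n(X)$ whenever $f\Vdash_{\tyC}F$ and $x\Vdash_o X$. The reason is that $t'\,f\,x\to_h b$, and ground realization transfers backwards along head steps.

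\emph{Seeds.} Fix $t,s$ with $\Num(t,u)$ and $\Num(s,v)$, and fix related $f,x$. The term $\mathsf{Seed}_+\,t\,s\,f\,x$ unfolds in one head step to $\mathsf{One}\,f\,x$. By \cref{lem:numerals}, $\Num(\mathsf{One},1)$, so $\mathsf{One}\,f\,x\Vdash_o F(X)$. Hence $\Num(\mathsf{Seed}_+\,t\,s,1)$, and $1=P_0^+(u,v)$. The same argument with $\mathsf{Zero}$ gives $\Num(\mathsf{Seed}_-\,t\,s,0)$, and $0=P_0^-(u,v)$.

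\emph{Update.} Assume $\Pol_m(p,q)$, and take $t,s$ with $\Num(t,u)$ and $\Num(s,v)$. The steps are as follows.
\begin{enumerate}
\item By \cref{lem:numerals}, $\Num(\mathsf{Succ}\,s,v+1)$.
\item Instantiating $\Pol_m(p,q)$ at the pair $(u,v+1)$ with the numeral terms $t$ and $\mathsf{Succ}\,s$ gives $\Num(p\,t\,(\mathsf{Succ}\,s),P_m^+(u,v+1))$ and $\Num(q\,t\,(\mathsf{Succ}\,s),P_m^-(u,v+1))$.
\item Applying the $\mathsf{Pos}$ clause of \cref{lem:numerals} to the four numerals $t$, $\mathsf{Succ}\,s$, $p\,t\,(\mathsf{Succ}\,s)$, $q\,t\,(\mathsf{Succ}\,s)$ shows that the body of $\mathsf{Next}_+$, with $p,q,t,s$ substituted, is a numeral for
\[
uP_m^+(u,v+1)+(v+1)P_m^-(u,v+1).
\]
By \cref{eq:pair-poly} this value is the first component of $(u,v+1)\otimes C_m(u,v+1)$, that is, $P_{m+1}^+(u,v)$.
\item Because $\mathsf{Next}_+\,p\,q\,t\,s\,f\,x$ head-reduces in one step to that $\mathsf{Pos}$ body applied to $f,x$, reverse unfolding gives $\Num(\mathsf{Next}_+\,p\,q\,t\,s,P_{m+1}^+(u,v))$.
\end{enumerate}
The $\mathsf{Neg}$ case is identical and yields $uP_m^-(u,v+1)+(v+1)P_m^+(u,v+1)=P_{m+1}^-(u,v)$.

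The only delicate point, and the expected obstacle, is bookkeeping about terms versus values. The hypothesis $\Pol_m$ must be quantified over \emph{all} closed numeral terms, not just canonical ones, so that it applies to the composite term $\mathsf{Succ}\,s$. Likewise the conclusion must be stated for the partial applications $\mathsf{Next}_\pm\,p\,q$, which are genuine closed terms of type $\PP$. The argument stays at the level of saturated ground applications throughout. This avoids any need to argue that the logical relation is closed under $\eta$ or under unfolding at higher type; the single reverse head step at ground type is enough.
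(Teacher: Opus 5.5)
Your proof is correct and follows the paper's argument. The seeds unfold in one head step to $\mathsf{One}\,f\,x$ and $\mathsf{Zero}\,f\,x$. For the update, you instantiate $\Pol_m$ at the terms $t$ and $\mathsf{Succ}\,s$, then apply the $\mathsf{Pos}/\mathsf{Neg}$ clauses of \cref{lem:numerals} to obtain the two components of $(u,v+1)\otimes C_m(u,v+1)$. You also make explicit the single reverse head-unfolding step at ground type, which the paper leaves implicit.
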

\begin{proof}
The seed rules reduce to one and zero on their last two arguments.
For the step, let $t,s$ represent $u,v$.
By \cref{lem:numerals}, $\mathsf{Succ}\ s$ represents $v+1$,
so the hypothesis applies to $p\,t\,(\mathsf{Succ}\ s)$ and
$q\,t\,(\mathsf{Succ}\ s)$. The two rules in \cref{eq:next-rules}
and the $\mathsf{Pos}/\mathsf{Neg}$ clauses of \cref{lem:numerals}
therefore give exactly the two components of
$(u,v+1)\otimes C_m(u,v+1)$ in \cref{eq:pair-poly}.
This argument applies to all $u,v$, including $u<v$.
\end{proof}

We combine the polynomial and numeral relations into an invariant for
a complete state call. Let $\vec t=(p,q,u,v,r,s,h,k)$ be its closed
argument tuple, with $p,q:\PP$ and $u,v,r,s,h,k:\NN$.
For natural-number pairs
$A=(A^+,A^-)$, $B=(B^+,B^-)$, and $W=(W^+,W^-)$, define
\begin{equation}\label{eq:register-realization}
\begin{aligned}
 \Reg_m(\vec t;A,B,W)\iff{}&\Pol_m(p,q)\\
 &\land\Num(u,A^+)\land\Num(v,A^-)\\
 &\land\Num(r,B^+)\land\Num(s,B^-)\\
 &\land\Num(h,W^+)\land\Num(k,W^-).
\end{aligned}
\end{equation}
This relates every stored component to its mathematical value;
the signed counter and weight values are $\val(A),\val(B),\val(W)$.
The initial terms in \cref{fig:control-compiler} represent
$A=B=(0,0)$ and $W=(1,0)$.
For the current instruction $\iota$ on counter $X$ and choice $e$,
update the selected counter pair $C$ and the weight pair simultaneously:
\begin{equation}\label{eq:pair-update}
 (C',W')=
 \begin{cases}
  (\operatorname{inc}C,W),&\iota=\mathsf{inc}_X,\\
  (C,W\otimes C_m(C)),&\iota=\mathsf{dec}_X(j)\land e=1,\\
  (\operatorname{dec}C,W\otimes C),&\iota=\mathsf{dec}_X(j)\land e=2.
 \end{cases}
\end{equation}
Here $C_m(C)=C_m(u,v)$ for $C=(u,v)$, and the other counter is
unchanged. Taking signed values gives counter changes $+1,0,-1$,
respectively, and, by \cref{eq:pair-poly-value},
\begin{equation}\label{eq:pair-weight-value}
                   \val(W')=\val(W)\,\phi_m(\iota,e,\val(C)).
\end{equation}
Thus these natural-pair operations realize the integer updates of
\cref{eq:overview-update,eq:weight-recurrence}.
The witness invariant identifies legal prefixes through $\val(W)$.
The representation invariant also covers prefixes with $\val(W)=0$ or
negative virtual counters. Every stored component represents a natural
number, so arithmetic observations terminate on every generated branch.
\begin{lemma}[All-candidate invariant]\label{lem:registers}
If $\Reg_m(\vec t;A,B,W)$ holds, each continuation of the corresponding
nonhalting state rule satisfies $\Reg_{m+1}$ for its updated natural
pairs. Hence every finite candidate prefix has such a representation.
\end{lemma}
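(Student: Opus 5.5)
The plan is a case analysis on the current instruction, using \cref{lem:polynomial} for the stored functions and \cref{lem:numerals} for each numeral component. The second sentence then follows by induction on candidate-prefix length. Assume $\Reg_m(\vec t;A,B,W)$ with $\vec t=(p,q,u,v,r,s,h,k)$.

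In every nonhalting control rule of \cref{fig:control-compiler}, each continuation receives $p'=\mathsf{Next}_+\,p\,q$ and $q'=\mathsf{Next}_-\,p\,q$. \Cref{lem:polynomial} gives $\Pol_{m+1}(p',q')$ directly from $\Pol_m(p,q)$, so the polynomial conjunct is uniform across cases.

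For the numeral conjuncts, I would treat the selected counter as $A$; the $B$ case is symmetric, with $r,s$ in place of $u,v$, and the unselected counter's terms are passed unchanged.
\begin{itemize}
\item \emph{Increment.} Both continuations pass $\mathsf{Succ}\,u$. \Cref{lem:numerals} gives $\Num(\mathsf{Succ}\,u,A^++1)$, which matches $\operatorname{inc}A$. The weight terms $h,k$ are unchanged, matching $W'=W$.
\item \emph{Zero guess ($e=1$).} The counter terms are unchanged. The weight terms are $\mathsf{Pos}\,h\,k\,(p\,u\,v)\,(q\,u\,v)$ and the corresponding $\mathsf{Neg}$ term. From $\Pol_m(p,q)$ instantiated at $t=u$, $s=v$ we get $\Num(p\,u\,v,P_m^+(A^+,A^-))$ and $\Num(q\,u\,v,P_m^-(A^+,A^-))$. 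The $\mathsf{Pos}/\mathsf{Neg}$ clause of \cref{lem:numerals} then yields the two coordinates of $W\otimes C_m(A)$, as required by \cref{eq:pair-update}.
\item \emph{Positive guess ($e=2$).} The counter pair becomes $(u,\mathsf{Succ}\,v)$, which represents $\operatorname{dec}A$. The weight terms $\mathsf{Pos}\,h\,k\,u\,v$ and $\mathsf{Neg}\,h\,k\,u\,v$ represent $W\otimes A$, again by \cref{lem:numerals}.
\end{itemize}
All these represented values are natural numbers, so nothing depends on $\val(A)$ being nonnegative or on $\val(W)\neq0$.

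The main care point is bookkeeping rather than mathematics. I must check that each continuation uses the \emph{old} $p,q,u,v$ in the weight update, exactly as the template states, so that the updated pairs coincide with \cref{eq:pair-update} at index $m$. The polynomial index must advance to $m+1$ only in $p',q'$. The $\Pol_m$ hypothesis is applied only to the closed terms $u,v$, which is legitimate because it quantifies over all closed numeral representations.

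For the ``hence'' clause, I would induct on $|v'|$ for $v'\in\Cand_M$, following the actual state call reached by the sequence of continuation children. The base case is the start rule: \cref{lem:polynomial} gives $\Pol_0(\mathsf{Seed}_+,\mathsf{Seed}_-)$, and \cref{lem:numerals} gives $\Num(\mathsf{Zero},0)$ and $\Num(\mathsf{One},1)$, so $\Reg_0$ holds with $A=B=(0,0)$ and $W=(1,0)$. For the step, a prefix $v'e$ exists only when $\pc_M(v')\in I_M$, so the state is nonhalting and the first part applies with $m=|v'|$. Along the way I would also record that the natural pairs track $s_M(v')$ through $\val$, and $W_M(v')$ via \cref{eq:pair-weight-value}.
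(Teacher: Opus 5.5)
Your proposal is correct and follows the same route as the paper's proof. You use \cref{lem:polynomial} uniformly for $p',q'$. You use \cref{lem:numerals} (through $\mathsf{Succ}$, $\mathsf{Pos}$, $\mathsf{Neg}$) for the increment, zero-guess and positive-guess cases, applied to the old counter and old polynomial. Induction from the seeded start rule then gives the claim for every candidate prefix, with no legality or nonzero-weight hypothesis; your write-up is just a more explicit version of the paper's terse case check.
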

\begin{proof}
The polynomial coordinates follow from \cref{lem:polynomial}.
An increment applies $\mathsf{Succ}$ to the positive coordinate;
a positive choice applies it to the negative coordinate and uses
$\mathsf{Pos},\mathsf{Neg}$ on the old weight and old counter.
A zero choice first applies $\Pol_m$ to the old counter's two numeral
terms and then uses the two multiplication components.
The unchanged coordinates retain their realizations. This proves the
step for each counter; induction from the seeds proves the last claim.
There is no legality or nonzero-weight hypothesis.
\end{proof}

\subsection{Operational correspondence, productivity, and disagreement}\label{sec:operational-correspondence}

Let $\mathcal K=b\,z\,\mathcal K\,\mathcal K$.
For an arbitrary triple $R=(A,B,W)\in(\N^2)^3$, write $W=(W^L,W^R)$:
$W^L$ is its first component and $W^R$ its second.
For $\diamond\in\{L,R\}$, define $T^\diamond(m,i,R)$
by its finite-address observations, or equivalently by the equations
\begin{equation}\label{eq:model-tree}
 T^\diamond(m,i,R)=
 \begin{cases}
 b\,(a^{W^\diamond}z)\,\mathcal K\,\mathcal K,
                                      &i\notin I_M,\\
 b\,z\,T^\diamond(m+1,i_1,R_1)\,
       T^\diamond(m+1,i_2,R_2),        &i\in I_M.
 \end{cases}
\end{equation}
The $i_e,R_e$ are the candidate successor and pair update for choice
$e\in\{1,2\}$; an increment duplicates its successor. The definition is
well founded when evaluated at a finite address, since a recursive
call consumes one address symbol. Every required finite node is
present: the root is $b$, observation children are finite chains,
and continuation children satisfy the same equations or are $\mathcal K$.

Every address is either $\epsilon$ or $iv$, with $i\in\N$ and
$v\in\N^*$. The three constructors act on observation maps by
\begin{equation}\label{eq:constructor-maps}
\begin{aligned}
 z(\epsilon)&=z,\\
 (aT)(\epsilon)&=a,
 &(aT)(0v)&=T(v),\\
 (bT_0T_1T_2)(\epsilon)&=b,
 &(bT_0T_1T_2)(iv)&=T_i(v)\qquad(i<3).
\end{aligned}
\end{equation}
All unlisted cases have value $\dagger$.
They realize these functions under ordinary terminal-head exposure.
The intermediate semantic domain contains arbitrary observation maps,
but the chains and the trees in \cref{eq:model-tree} satisfy the
rank conditions. Induction on finite depth establishes their totality.
The operational correspondence transfers this totality to the compiled
terms under ordinary HORS head rewriting.

\begin{lemma}[Operational correspondence]\label{lem:state}
If $\Reg_m(\vec t;R)$, then, for every address $w$ and terminal $a$,
\[
 \Obs_{G^\diamond_M}(\mathsf{F}_{\widehat i}\,\vec t,w,a)
       \iff T^\diamond(m,i,R)(w)=a .
\]
\end{lemma}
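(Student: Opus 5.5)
We argue by induction on the length of the address $w$, simultaneously for all $m$, $i$, $R$ and all tuples $\vec t$ with $\Reg_m(\vec t;R)$. The quantifier order matters: the induction hypothesis must be available for the successor states, whose parameters are $(m+1,i_e,R_e)$ and whose argument tuples are the primed terms. Fixing the address length and quantifying over everything else makes this legitimate.

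\emph{Halting case} ($\widehat i=0$, i.e.\ $i\notin I_M$). One head step unfolds $\mathsf{F}_0\,\vec t$ to $b\,(w_\diamond\,a\,z)\,\mathsf{K}\,\mathsf{K}$, with $w_L=h$ and $w_R=k$. The root observation is then $b$, which matches $T^\diamond(m,i,R)(\epsilon)=b$. Addresses $iv$ with $i\ge3$ are absent on both sides. For child $0$, $\Reg_m$ gives $\Num(h,W^+)$ and $\Num(k,W^-)$, so \cref{lem:numerals} yields $h\,a\,z\Vdash_o a^{W^L}z$ and $k\,a\,z\Vdash_o a^{W^R}z$. For children $1,2$ we need a separate auxiliary claim: $\Obs(\mathsf{K},w,a)\iff\mathcal K(w)=a$. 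This follows by its own induction on $|w|$, since $\mathsf{K}\to_h b\,z\,\mathsf{K}\,\mathsf{K}$ and the constructor maps of \cref{eq:constructor-maps} describe exactly which children exist. Determinism of head rewriting means the observations of $\mathsf{F}_0\,\vec t$ are exactly those of its unfolding, as in the definition of $\Obs_G$.

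\emph{Active case} ($i\in I_M$, so $\widehat i=i$). The body from \cref{fig:control-compiler} exposes $b$ with observation child $z$, matching the root and child $0$ of \cref{eq:model-tree}. For child $e\in\{1,2\}$, the continuation is $\mathsf{F}_{\widehat{i_e}}\,\vec t_e$. We split into the four instruction templates and the two choices, and check two things in each case. First, the successor position is $\widehat{i_e}$ for the position $i_e$ of \cref{eq:candidate-successor}. Second, $\vec t_e$ satisfies $\Reg_{m+1}(\vec t_e;R_e)$ for the pair update \cref{eq:pair-update}; this is exactly \cref{lem:registers}. The induction hypothesis applied at the shorter address $v$ then gives $\Obs(\mathsf{F}_{\widehat{i_e}}\,\vec t_e,v,a)\iff T^\diamond(m+1,i_e,R_e)(v)=a$. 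Combining this with the head step gives the claim at $ev$.

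One technical point needs care: when $i_e\notin I_M$, the normalized index is $0$, and $T^\diamond$ must be read at the unnormalized $i_e$. Both land in the halting clause, so they agree. The main obstacle is bookkeeping rather than ideas. We must confirm that the template terms are literally the realizers that \cref{lem:registers} speaks of; in particular, the zero choice uses $p\,u\,v$ and $q\,u\,v$ with the old $p,q$ and old counter, while the successor call receives $p'=\mathsf{Next}_+\,p\,q$ and $q'=\mathsf{Next}_-\,p\,q$. We must also confirm that the choice of side only affects the halt rule. We handle this by proving \cref{lem:registers}-style facts once, generically in $\diamond$, and invoking them uniformly across cases.
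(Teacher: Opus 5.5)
Your proposal is correct and matches the paper's proof step for step. The shared structure is induction on the address, universally over $m,i,\vec t,R$, with a separate induction showing that $\mathsf{K}$ realizes $\mathcal K$, \cref{lem:numerals} for the halt observation child, and \cref{lem:registers} plus the induction hypothesis for children one and two. Your normalization remark is the paper's closing observation. It is absorbed automatically, since the induction hypothesis is already stated for $\mathsf{F}_{\widehat{i_e}}$ against $T^\diamond(m+1,i_e,R_e)$ for arbitrary $i_e\in\N$.
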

\begin{proof}
First, induction on the address proves that the rule for $\mathsf{K}$ realizes
$\mathcal K$. For the state assertion, induct on $w$, universally over
$m,i,\vec t,R$. Unfold the state once. At the empty address both sides
have root $b$. At a child index at least three neither side has an
observation. Child zero is $z$ in a nonhalting state; in a halting
state, \cref{lem:numerals} identifies it with the selected finite
chain. At child one or two of a nonhalting state, apply
\cref{lem:registers} and the induction hypothesis to the shorter
address. At a halt these children use the assertion for $\mathsf{K}$.
Normalization of $i$ preserves instruction lookup, including halts.
\end{proof}

Exposing $b$ gives the control node. Productivity also requires
its observation child to finish evaluating the selected arithmetic
term; \cref{lem:registers,lem:numerals} establish its termination.
\begin{lemma}[Productivity]\label{lem:productive}
Both compiled schemes are productive.
\end{lemma}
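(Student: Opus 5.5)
The plan is to transfer productivity from the model trees through \cref{lem:state}. The start rule of $G_M^\diamond$ unfolds in one head step to $\mathsf{F}_{\widehat 1}\,\vec t_0$, where $\vec t_0$ consists of the seeds, $\mathsf{Zero}$ for all four counter components, and $(\mathsf{One},\mathsf{Zero})$ for the weight. By \cref{lem:numerals,lem:polynomial}, this tuple satisfies $\Reg_0(\vec t_0;R_0)$ with $R_0=((0,0),(0,0),(1,0))$. Since ground realization is closed under reverse head unfolding, \cref{lem:state} then shows that the observations of $\mathsf{S}$ coincide with those of the map $T^\diamond(0,1,R_0)$. (For the empty program, $1\notin I_M$ and normalization gives $\mathsf{F}_0$, which is still covered by \cref{lem:state}.)

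The next step is to show, by induction on $d$, that every model tree $T^\diamond(m,i,R)$ is a ranked tree over $\Sigma_0$ without $\bot$ and total to depth $d$. Concretely: at every present node the label is defined, and each child index below the arity is present. The induction runs uniformly over $m,i,R$, and the tree $\mathcal K$ is handled first by the same induction. At the root the label is $b$ with arity three. Child $0$ is either $z$ or a finite chain $a^{n}z$, and a chain is total and rank-correct at every depth by a separate induction on $n$. Children $1,2$ are either $\mathcal K$ or $T^\diamond(m+1,i_e,R_e)$, and the induction hypothesis at depth $d$ covers them. Since $R_e$ is again a triple of natural pairs, no legality assumption is needed, as \cref{eq:model-tree} is defined for arbitrary $R$.

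The main step is to convert these positive observations into head convergence of the actual HORS terms at every node. Here I would argue directly from the definition of $\Obs$, by induction on $w$, with the following claim. If the start term satisfies $\Obs(S,w,c)$ for some $c$ of arity $r$, then there is a term reachable along $w$, obtained by head exposure and child selection, that exposes $c(u_0,\dots,u_{r-1})$. Moreover, for each $i<r$, the model tree has some label at $wi$, so $\Obs(S,wi,c')$ holds. Unwinding that observation goes through the same head exposure, which is deterministic, so $u_i$ itself head-converges. Hence every actual child of every reached term exposes a terminal, which is exactly productivity to every depth. The obstacle is this determinism bookkeeping: one must ensure that the child term $u_i$ witnessing the observation at $wi$ is the same term as the actual child at $w$. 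I would discharge it using the remark in the preliminaries that two head exposures of the same term give the same terminal and the same children.

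The observation children are already covered by this argument, since their realization by finite chains comes from \cref{lem:numerals} via \cref{lem:registers}.
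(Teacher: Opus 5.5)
Your proof is correct, but it is organized differently from the paper's. The paper does not invoke \cref{lem:state} here. It proves directly, by induction on depth, that every state call $\mathsf{F}_{\widehat i}\,\vec t$ with $\Reg_m(\vec t;R)$ is productive:
\begin{itemize}
\item one head unfolding exposes $b$;
\item child $0$ is $z$, or a finite chain by \cref{lem:numerals};
\item children $1,2$ are handled by \cref{lem:registers} with the induction hypothesis, or by the rule for $\mathsf K$.
\end{itemize}
The start rule then instantiates this stronger claim. You instead split the argument into three pieces. First, you use the operational correspondence as a black box to identify the observations of $\mathsf S$ with $T^\diamond(0,1,R_0)$. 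Second, you do a purely mathematical totality and rank check on the model trees of \cref{eq:model-tree}. Third, you prove a generic transfer step from determinism of head exposure: if a closed ground term's positive observations form a total ranked tree, the term is productive. Your route is more modular. The induction lives entirely in the model, and the determinism bookkeeping is stated once and made explicit. The paper uses that same step only implicitly, for instance when it says that a realized chain $h\,a\,z\Vdash_o a^nz$ ``produces every required node.'' The paper's route is more self-contained: it needs only the representation invariant, not the full biconditional address correspondence, and it works with the productivity definition directly on the compiled terms. The paper's remark before \cref{lem:state}, that the operational correspondence ``transfers this totality to the compiled terms,'' describes exactly your strategy.
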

\begin{proof}
Prove the stronger assertion for every represented state by induction
on observation depth. Depth zero is immediate. A state exposes $b$
after one head unfolding. At child zero, the nonhalting case is $z$,
and the halting case realizes a finite chain by \cref{lem:numerals};
a chain produces every required node to the remaining depth.
The continuation children satisfy the induction hypothesis with the
updated invariant, or use $\mathsf{K}$, whose depth productivity follows from
its rule by the same induction. Apply this assertion to the initial
representation and the start rule. This proves productivity even on
illegal candidate paths and when the signed weight is zero.
\end{proof}

\begin{theorem}[Intermediate compilation]\label{thm:compiler}
The construction $M\mapsto(G_M^L,G_M^R)$ is effective and produces
well-formed safe productive schemes of order at most four over $\Sigma_0$.
Moreover, writing $\mathsf{S}$ for the start symbol in each scheme,
\[
\begin{aligned}
 M\halts\iff\exists w\in\N^*\;\exists a,b\in\Sigma_0\;
 \bigl[a\ne b &\land\Obs_{G_M^L}(\mathsf{S},w,a)\\
              &\land\Obs_{G_M^R}(\mathsf{S},w,b)\bigr].
\end{aligned}
\]
\end{theorem}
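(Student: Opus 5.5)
The first half of \cref{thm:compiler} only collects earlier results. The construction is effective because it emits fixed templates indexed by the instruction list, with jump targets normalized by the computable map $j\mapsto\widehat j$. Well-formedness and the order bound come from \cref{lem:typing}, safety from \cref{lem:safety}, and productivity from \cref{lem:productive}. The real work is the displayed equivalence, which I would prove through the model trees $T^\diamond(m,i,R)$ of \cref{eq:model-tree} and \cref{lem:live-halt}.

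\emph{Reduction to model trees.} The start rule unfolds in one head step to $\mathsf{F}_{\widehat 1}\,\vec t_0$, where $\vec t_0$ consists of the seeds, zero counter numerals, and weight numerals $(\mathsf{One},\mathsf{Zero})$. By \cref{lem:numerals,lem:polynomial}, $\Reg_0(\vec t_0;R_0)$ holds with $R_0=((0,0),(0,0),(1,0))$. Because observations are invariant under reverse head steps, \cref{lem:state} gives $\Obs_{G_M^\diamond}(\mathsf{S},w,a)\iff T^\diamond(0,1,R_0)(w)=a$ for every $w$ and $a$. The equivalence therefore reduces to a statement about the partial maps $T^L:=T^L(0,1,R_0)$ and $T^R:=T^R(0,1,R_0)$. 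Since both are total on their domains, we must show that $M\halts$ holds if and only if some $w$ lies in both domains with $T^L(w)\ne T^R(w)$.

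\emph{Address analysis.} Unfolding \cref{eq:model-tree} by induction on $v\in\{1,2\}^*$, I would show the following. Address $v$ is a control node of both trees exactly when $v\in\Cand_M$. Its parameters are $m=|v|$, control position $\pc_M(v)$, and a pair triple $R_v$ obtained by iterating \cref{eq:pair-update}. By induction using \cref{eq:pair-weight-value} and \cref{eq:weight-recurrence}, the signed values satisfy $\val(A)$, $\val(B)$ equal to the virtual counters of $s_M(v)$ and $\val(W_v)=W_M(v)$. The two sides use the same recursion with the same $R$ and differ only at halt observations, so for every $w$ exactly one of the following holds. (i) $w$ has the form $v$ or $v\,0$ with $v$ a nonhalting candidate, so both labels are $b$ or $z$. (ii) $w$ passes through a $\mathcal K$ subtree below a halt, where both sides agree. (iii) $w=v\,0\,0^n$ with $\pc_M(v)\notin I_M$. (iv) $w$ is absent on both sides. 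The case split uses the first letter of $w$ after the maximal candidate prefix.

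\emph{Both directions.} If $M\halts$, \cref{lem:live-halt} gives $v\in\Cand_M$ with $\pc_M(v)\notin I_M$ and $W^L-W^R=W_M(v)\ne0$. At the address $v\,0\,0^{\min(W^L,W^R)}$ one side reads $z$ and the other reads $a$. Conversely, a discrepancy must fall in case (iii), since cases (i), (ii) and (iv) agree. The chains $a^{W^L}z$ and $a^{W^R}z$ differ, so $W_M(v)\ne0$, and \cref{lem:live-halt} yields $M\halts$.

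\emph{Main obstacle.} The delicate step is the bookkeeping in the address analysis. Specifically, the pair triple reached along $v$ under \cref{eq:model-tree} must be the one whose signed weight is $W_M(v)$ from \cref{eq:weight-recurrence}. This needs $m=|v|$ to track the polynomial index $C_m$, and it needs the old counter pair to be used before the update. I would state this as a separate induction on $|v|$ and invoke \cref{eq:pair-poly-value} at each zero choice.
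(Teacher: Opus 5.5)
Your proposal is correct and follows essentially the paper's proof. You collect the static claims from the earlier lemmas, use the start rule and \cref{lem:state} to reduce observations to the model trees of \cref{eq:model-tree}, and close both directions with a candidate-address analysis plus \cref{lem:live-halt}, spelling out more explicitly than the paper the initial $\Reg_0$ instance and the invariant $\val(W_v)=W_M(v)$. One small slip is that your case list (i)--(iv) omits the root address $w=v$ of a halting candidate; both trees read $b$ there, so the argument is unaffected.
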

\begin{proof}
The static and productivity claims are \cref{lem:typing,lem:safety,lem:productive}.
The compiler traverses the finite instruction list and inserts the
displayed fixed templates, so it is total computable.

A candidate prefix of length $m$ determines an address
$v\in\{1,2\}^m$, ending at its state in \cref{eq:model-tree}.
This follows by induction on the prefix, using the chosen continuation
child at each step; all preceding states are nonhalting.
A legal halt gives weight components $r\ne s$, by
\cref{lem:live-halt} and $\val(W)=r-s$.
The two trees therefore differ at
\begin{equation}\label{eq:discrepancy-address}
                  v\,0\,0^{\min(r,s)}.
\end{equation}
One chain ends with $z$ there, while the other still has label $a$.
By \cref{lem:state}, the compiled HORS have this observation.

Conversely, consider any common address with different terminal labels.
As long as the address follows children one or two, both roots are $b$
and both computations have the same candidate control and pair data.
If such a path enters padding after a halt, the two subtrees coincide.
An out-of-range child cannot give a common observation.
Consequently a discrepancy must select child zero of a genuine
candidate state. At a nonhalting state both observation subtrees are $z$.
At a halt they are $a^rz$ and $a^sz$, whose only possible label mismatch
at a common address occurs at suffix $0^{\min(r,s)}$, with $r\ne s$.
Thus the candidate halt has nonzero weight and is a legal source halt
by \cref{lem:live-halt}.

Finally, two unequal productive ranked trees have different labels at
some address present in both trees: choose a shortest address of
disagreement. Its proper ancestors agree, and their labels determine
their child arities, so both nodes at that address exist. This also proves
the intermediate equivalence
$M\nonhalts\iff\BT(G_M^L)=\BT(G_M^R)$.
\end{proof}

\subsection{The two-terminal target}\label{sec:binary-target}

The final pass replaces the intermediate constructors by fixed contexts
over $\Sigma_2$. Define
\begin{equation}\label{eq:binary-encoding}
\begin{aligned}
 E(z)&=e,\\
 E(a\,t)&=d(e,E(t)),\\
 E(b(t_0,t_1,t_2))
   &=d\bigl(d(e,e),d(E(t_0),d(E(t_1),E(t_2)))\bigr).
\end{aligned}
\end{equation}
These guarded equations define $E$ on finite and infinite
$\Sigma_0$-trees.

\begin{lemma}[Binary output encoding]\label{lem:binary-encoding}
Every productive HORS $G$ over $\Sigma_0$ can be effectively translated
to a productive HORS $G^E$ over $\Sigma_2$ such that
\[
 \ord(G^E)\le\max\{1,\ord(G)\},\qquad
 \BT(G^E)=E(\BT(G)).
\]
If $G$ is safe, then $G^E$ is safe. For any productive HORS $G,H$
over $\Sigma_0$,
\[
              \BT(G^E)=\BT(H^E)\iff\BT(G)=\BT(H).
\]
\end{lemma}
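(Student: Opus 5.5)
The translation replaces each terminal by a fresh nonterminal that builds the corresponding $E$-context. Concretely, introduce nonterminals $\mathsf{E}_z:o$, $\mathsf{E}_a:o\to o$ and $\mathsf{E}_b:o\to o\to o\to o$ with rules
\[
 \mathsf{E}_z\to e,\qquad
 \mathsf{E}_a\,x\to d\,e\,x,\qquad
 \mathsf{E}_b\,x_0\,x_1\,x_2\to d\,(d\,e\,e)\,(d\,x_0\,(d\,x_1\,x_2)).
\]
$G^E$ is obtained from $G$ by substituting, in every rule body, each occurrence of $z,a,b$ by $\mathsf{E}_z,\mathsf{E}_a,\mathsf{E}_b$, respectively; the alphabet becomes $\Sigma_2$. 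This is a syntax-directed, total computable pass. The new nonterminals have types $o^r\to o$ of order at most one, and existing nonterminal types are unchanged, which gives $\ord(G^E)\le\max\{1,\ord(G)\}$.

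Safety is a direct check. The types $o^r\to o$ are homogeneous, and each helper body has only ground operands, with ground parameters or closed subterms. In the old rules a terminal was a closed constant of the same type as its replacing nonterminal, and nonterminals do not count as parameters. So every operand keeps its type and its set of occurring parameters, and the safety inequalities of \cite[Definition~1]{hague2017} transfer verbatim, as does homogeneity of subterm types.

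For the semantic equation $\BT(G^E)=E(\BT(G))$, I would relate closed ground terms $t$ of $G$ to their images $t^E$ of $G^E$ (same substitution), and prove, by induction on the address $w$, that $\Obs_{G^E}(t^E,w,c)$ holds iff $E(\BT_G(t))(w)=c$.
\begin{itemize}
\item Head steps of $t$ using rules of $G$ correspond one-to-one to head steps of $t^E$, since substitution commutes with the renaming.
\item Hence if $t\Downarrow_G \sigma(\vec u)$, then $t^E$ reduces to $\mathsf{E}_\sigma\,\vec u^E$, which after one more step exposes the fixed context with the $u_i^E$ at the holes.
\item Addresses inside the context are read off directly, and addresses below a hole reduce to a strictly shorter address at some $u_i^E$, where the induction hypothesis applies.
\end{itemize}
The only subtlety is divergence. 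If $t$ never exposes a terminal, then $t^E$ never exposes one either, because the head symbols are in bijection. Since $G$ is productive this case does not arise at reachable nodes; this also yields productivity of $G^E$, since every hole is filled by a productive term and the contexts are finite. Applying the claim to $S$ and using \cref{eq:observation-equality} gives $\BT(G^E)=E(\BT(G))$.

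The last claim reduces to injectivity of $E$ on possibly infinite $\Sigma_0$-trees. The direction from right to left is immediate. For the converse, suppose $T\ne U$ and take a shortest disagreement address $w$. Map $w$ to its target address, defined by concatenating the hole addresses $1$ for $a$ and $10,110,111$ for $b$ along the common ancestors. At that target address, $E(T)$ and $E(U)$ then begin with different contexts. Their label at the root or at address $0$ separates $z$ from $a,b$ and $a$ from $b$: $e$ versus $d$ at the root, and $e$ versus $d$ at child $0$. So $E(T)\ne E(U)$.

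The main obstacle is the bookkeeping in the correspondence: head reduction of $t^E$ interleaves the extra helper steps, and I must check that determinism makes $\Obs$ insensitive to this. I expect the cleanest route is to use the backward-closure of realization under head steps, as in \cref{eq:arithmetic-relation}, and treat each helper as realizing the corresponding $E$-context map on observation maps, exactly as the constructor maps in \cref{eq:constructor-maps} were used.
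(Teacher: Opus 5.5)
Your proposal is correct and follows the paper's proof essentially step for step. It uses the same three fresh helper nonterminals with the same rules, and the same observation that each terminal and its replacement share a type, so operand types, parameter sets, safety and the order bound carry over. It also gets $\BT(G^E)=E(\BT(G))$ and productivity from commutation of the translation with substitution, and proves injectivity of $E$ from the root and child-$0$ labels together with the fixed hole addresses $1$ and $10,110,111$; your address induction and shortest-disagreement argument just spell out details the paper leaves implicit.
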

\begin{proof}
The root label distinguishes $z$ from $a,b$; in the latter case,
the left child's label distinguishes $a$ from $b$.
The encoded child of $a$ occurs at address $1$, and the encoded
children of $b$ occur at $10,110,111$.
Decoding along finite addresses therefore gives
\[
                  E(T)=E(U)\quad\Longleftrightarrow\quad T=U.
\]

For a scheme $G$, rename its nonterminals to reserve three fresh
names, then replace its terminals $z,a,b$ by
$\mathsf Z:o$, $\mathsf A:o\to o$, and
$\mathsf B:o^3\to o$, with rules
\[
\begin{aligned}
 \mathsf Z&\to e,\\
 \mathsf A\,x&\to d\,e\,x,\\
 \mathsf B\,x_0\,x_1\,x_2
    &\to d\,(d\,e\,e)\,(d\,x_0\,(d\,x_1\,x_2)).
\end{aligned}
\]
Call the result $G^E$. The replacements have the original terminal
types, so all old subterm types and parameter occurrences are preserved.
The new helpers have order at most one, giving the claimed order bound.
Their types are homogeneous, and every operand containing a parameter
has ground type. Thus safety is preserved.

The translation commutes with substitution. Each exposed source
constructor is implemented by one helper unfolding and a finite
terminal context, giving $\BT(G^E)=E(\BT(G))$.
Every finite target observation visits finitely many source nodes and
context nodes, so productivity is preserved.
Injectivity of $E$ now gives both directions of equivalence preservation.
\end{proof}

Set $\widehat G_M^\diamond=(G_M^\diamond)^E$ for
$\diamond\in\{L,R\}$. By \cref{thm:compiler,lem:binary-encoding},
these are safe, productive HORS of order at most four over $\Sigma_2$,
and
\[
 M\nonhalts\iff\BT(G_M^L)=\BT(G_M^R)
       \iff\BT(\widehat G_M^L)=\BT(\widehat G_M^R).
\]
Both passes are total syntax transformations. Extend the compiler to
natural-number codes by mapping malformed source codes, which are
excluded from the halting set, to two copies of $\mathsf{S}\to e$
over $\Sigma_2$. This establishes the total computable reduction in
\cref{eq:main-reduction}.
The reduction from the halting complement yields $\PiOne$-hardness for
$\EqSafeProd$, and for the classes obtained by omitting safety,
productivity, or the order restriction. The matching upper bound is proved next.

%% file: figures/control-compiler.tex
\begin{figure*}[p]
\centering
\begin{minipage}{\textwidth}
\normalsize
\setlength{\abovedisplayskip}{5pt}
\setlength{\belowdisplayskip}{5pt}
\setlength{\jot}{2pt}
\hrule
\smallskip
\textbf{Rule context and shared abbreviations}
\[
\begin{gathered}
 L=|M|,\qquad \vec t=(p,q,u,v,r,s,h,k),\qquad
 p,q:\PP,\quad u,v,r,s,h,k:\NN,\\
 p'=\mathsf{Next}_+\,p\,q,\qquad
 q'=\mathsf{Next}_-\,p\,q,\qquad
 \widehat j=\begin{cases}j,&j\in I_M,\\0,&j\notin I_M,\end{cases}
 \qquad w_L=h,\quad w_R=k.
\end{gathered}
\]
\smallskip\hrule\smallskip
\textbf{Control, halt, padding, and start rules}\quad $\diamond\in\{L,R\}$.
\[
\begin{alignedat}{2}
 \mathsf{F}_i\,p\,q\,u\,v\,r\,s\,h\,k
  &\to b\,z\,
       \bigl(\mathsf{F}_{\widehat{i+1}}\,p'\,q'\,(\mathsf{Succ}\,u)\,v\,r\,s\,h\,k\bigr)
  &&\quad M(i)=\mathsf{inc}_A\\[-1pt]
  &\phantom{{}\to b\,z\,}
       \bigl(\mathsf{F}_{\widehat{i+1}}\,p'\,q'\,(\mathsf{Succ}\,u)\,v\,r\,s\,h\,k\bigr)
  &&\\[5pt]
 \mathsf{F}_i\,p\,q\,u\,v\,r\,s\,h\,k
  &\to b\,z\,
       \bigl(\mathsf{F}_{\widehat{i+1}}\,p'\,q'\,u\,v\,(\mathsf{Succ}\,r)\,s\,h\,k\bigr)
  &&\quad M(i)=\mathsf{inc}_B\\[-1pt]
  &\phantom{{}\to b\,z\,}
       \bigl(\mathsf{F}_{\widehat{i+1}}\,p'\,q'\,u\,v\,(\mathsf{Succ}\,r)\,s\,h\,k\bigr)
  &&\\[5pt]
 \mathsf{F}_i\,p\,q\,u\,v\,r\,s\,h\,k
  &\to b\,z\,
       \bigl(\mathsf{F}_{\widehat{i+1}}\,p'\,q'\,u\,v\,r\,s
  &&\quad M(i)=\mathsf{dec}_A(j)\\[-1pt]
  &\qquad(\mathsf{Pos}\,h\,k\,(p\,u\,v)\,(q\,u\,v))
  &&\\[-1pt]
  &\qquad(\mathsf{Neg}\,h\,k\,(p\,u\,v)\,(q\,u\,v))\bigr)
  &&\\[-1pt]
  &\phantom{{}\to b\,z\,}
       \bigl(\mathsf{F}_{\widehat j}\,p'\,q'\,u\,(\mathsf{Succ}\,v)\,r\,s
  &&\\[-1pt]
  &\qquad(\mathsf{Pos}\,h\,k\,u\,v)\,
           (\mathsf{Neg}\,h\,k\,u\,v)\bigr)
  &&\\[5pt]
 \mathsf{F}_i\,p\,q\,u\,v\,r\,s\,h\,k
  &\to b\,z\,
       \bigl(\mathsf{F}_{\widehat{i+1}}\,p'\,q'\,u\,v\,r\,s
  &&\quad M(i)=\mathsf{dec}_B(j)\\[-1pt]
  &\qquad(\mathsf{Pos}\,h\,k\,(p\,r\,s)\,(q\,r\,s))
  &&\\[-1pt]
  &\qquad(\mathsf{Neg}\,h\,k\,(p\,r\,s)\,(q\,r\,s))\bigr)
  &&\\[-1pt]
  &\phantom{{}\to b\,z\,}
       \bigl(\mathsf{F}_{\widehat j}\,p'\,q'\,u\,v\,r\,(\mathsf{Succ}\,s)
  &&\\[-1pt]
  &\qquad(\mathsf{Pos}\,h\,k\,r\,s)\,
           (\mathsf{Neg}\,h\,k\,r\,s)\bigr)
  &&\\[7pt]
 \mathsf{F}_0\,p\,q\,u\,v\,r\,s\,h\,k
  &\to b\,(w_\diamond\,a\,z)\,\mathsf{K}\,\mathsf{K}
  &&\\[4pt]
 \mathsf{K}&\to b\,z\,\mathsf{K}\,\mathsf{K}
  &&\\[4pt]
 \mathsf{S}&\to \mathsf{F}_{\widehat1}\,\mathsf{Seed}_+\,\mathsf{Seed}_-
  &&\\[-1pt]
  &\qquad\mathsf{Zero}\,\mathsf{Zero}\,\mathsf{Zero}\,\mathsf{Zero}\,
           \mathsf{One}\,\mathsf{Zero}.
  &&
\end{alignedat}
\]
\smallskip\hrule\smallskip
\textbf{Scheme assembly}\quad $\diamond\in\{L,R\}$.
\[
\begin{aligned}
 \llbracket M\rrbracket^\diamond
   &=(\Sigma_0,\mathcal N_M,\mathcal R_M^\diamond,\mathsf{S}),\\
 \mathcal N_M
   &=\mathcal N_{\mathrm{arith}}\cup\{\mathsf{S}:o,\mathsf{K}:o\}
     \cup\{\mathsf{F}_i:\PP\to\PP\to\NN^6\to o\mid 0\le i\le L\},\\
 \mathcal R_M^\diamond
   &=\mathcal R_{\mathrm{arith}}
     \cup\mathcal R_{\mathrm{ctrl}}(M)
     \cup\mathcal R_{\mathrm{aux}}^\diamond .
\end{aligned}
\]
\smallskip\hrule
\end{minipage}
\caption{Compilation of source control flow into the intermediate HORS
$G_M^\diamond=\llbracket M\rrbracket^\diamond$.
$\mathcal N_{\mathrm{arith}},\mathcal R_{\mathrm{arith}}$ are the
arithmetic nonterminals and rules of \cref{sec:rules}.
The first four templates select one rule per $i\in I_M$ by the
right-hand conditions, forming $\mathcal R_{\mathrm{ctrl}}(M)$;
the last three rules form $\mathcal R_{\mathrm{aux}}^\diamond$.
Primed terms, $\vec t$, and $w_\diamond$ are textual abbreviations
expanded before the rules are emitted.}
\label{fig:control-compiler}
\Description{A compiler figure collecting every control-flow rule.
The four source-instruction conditions are aligned to the right of
the corresponding F rules, followed by the halt F0, padding K and
start S rules. Finite scheme assembly appears below.}
\end{figure*}

%% file: sections/05-upper-bound.tex
\section{\texorpdfstring{$\PiOne$}{Π⁰₁}-completeness}\label{sec:upper}

Jochems provides an effective $\bot$-elimination transform that replaces
every $\bot$ node by an infinite chain of a fresh terminal, preserving
and reflecting tree equality~\cite[Lemma~2.7 and Proposition~2.8]{jochems2021}.
Its outputs are productive, so enumerating their finite prefixes
semidecides inequality and yields the general co-c.e.\ upper bound.

Here we give a direct proof using finite monotone semantics for head
convergence~\cite[Section~4.1 and Theorem~4.1]{salvati2015}.
We construct the finite model and prove
that it decides whether a closed ground term exposes a terminal.
This allows us to compute the original tree's prefix at any prescribed
finite depth, distinguishing terminal labels, $\bot$ nodes, and absent
addresses, even for nonproductive schemes. Equality and productivity
are then universal conditions on these computable prefixes. Together
with the decidable syntactic restrictions and the lower bound in
\cref{sec:binary-target}, this establishes the $\PiOne$-completeness
claims of \cref{thm:main}.

\subsection{A finite semantics for head convergence}

Our immediate task is to decide whether a closed ground term $t:o$
exposes a terminal after finitely many head steps. This concerns only
the current root: a scheme with the rule $\mathsf{S}\to a\,\mathsf{S}$
exposes $a$, although it generates an infinite tree, whereas
$\mathsf{S}\to\mathsf{S}$ never exposes a terminal.

We interpret recursive definitions by least fixed points, following
the standard domain-theoretic construction~\cite[Theorem~2.1.19]{abramsky1994}.
We assign each type $\sigma$ a finite semantic domain $D_\sigma$.
At ground type we retain only convergence information, using
\[
 D_o=\{0<1\}.
\]
The value $1$ records head convergence; $0$ is the least approximation,
providing no convergence guarantee. The order $0<1$ expresses increasing
information. Labels and descendants are forgotten: $z$ and $a\,z$
both have value $1$, as does $a\,u$ even when $u$ diverges.

For a function type, the abstract value describes how argument
information determines result information. Define $D_{\sigma\to\tau}$
to consist of all monotone maps $D_\sigma\to D_\tau$, ordered pointwise.
Monotonicity ensures that increasing argument information cannot
decrease result information. For example,
\[
 D_{o\to o}=\{d\mapsto0,\ d\mapsto d,\ d\mapsto1\}.
\]
These describe, respectively, an always-divergent result, returning the
argument, and an always-convergent head such as a fixed terminal.
Each domain is finite by induction on types, since there are only
finitely many maps between finite sets. Its least element is
$\bot_o=0$ at ground type and the constant-$\bot_\tau$ map at
type $\sigma\to\tau$.

Let $F:\tau_F$ range over the input scheme $G$'s finite list of nonterminals, each
with its given rule $F\,x_1\cdots x_k\to t_F$.
A global environment $\eta$ assigns a current approximation
$\eta(F)\in D_{\tau_F}$ to every such symbol. All these environments
form the finite poset $E_G=\prod_{F:\tau_F}D_{\tau_F}$, ordered pointwise.

The parameters of a rule receive values locally. For a rule with
$x_i:\sigma_i$ and each tuple of abstract arguments
$\vec v\in D_{\sigma_1}\times\cdots\times D_{\sigma_k}$, form
\[
 \rho_{\vec v}=[x_1\mapsto v_1,\ldots,x_k\mapsto v_k].
\]
A rule without parameters uses the empty local environment.
For given global and local environments, define the term denotation
$\ev{t}_{\eta,\rho}$ recursively by
\begin{equation}\label{eq:finite-denotation}
\begin{aligned}
 \ev{x}_{\eta,\rho}&=\rho(x),\\
 \ev{F}_{\eta,\rho}&=\eta(F),\\
 \ev{a}_{\eta,\rho}&=\lambda d_1\cdots d_r.\,1
                                      &&(a:o^r\to o),\\
 \ev{t\,u}_{\eta,\rho}&=
       \ev{t}_{\eta,\rho}\bigl(\ev{u}_{\eta,\rho}\bigr).
\end{aligned}
\end{equation}
For a nullary terminal the third clause gives $1$. For closed terms
we omit $\rho$, since there are no free parameters.
Typing induction shows that $\ev{t}_{\eta,\rho}\in D_\sigma$ whenever
$t:\sigma$. The interpretation is also monotone in both environments.
For application, increasing the function and argument values from
$f,d$ to $g,e$ gives
$(f\le g\land d\le e)\Longrightarrow f(d)\le f(e)\le g(e)$:
the first inequality uses monotonicity of $f$, and the second the
pointwise order $f\le g$.

The rules determine a single update of the global environment:
\[
 \Phi_G(\eta)(F)(\vec v)=\ev{t_F}_{\eta,\rho_{\vec v}}.
\]
Typing and monotonicity make $\Phi_G$ a monotone endomap of $E_G$.
We next compute its least fixed point by finite iteration.

\subsection{Effective computation of the least fixed point}\label{sec:finite-tables}

To compute with this model, represent each domain by a list of distinct
elements with decidable order and a least element.
At ground type use the list $[0,1]$ with $0<1$.
Given representations of $D_\sigma,D_\tau$, enumerate the
$|D_\tau|^{|D_\sigma|}$ function tables and retain exactly those satisfying
\[
 \forall x,y\in D_\sigma\quad
       x\le y\Longrightarrow f(x)\le f(y).
\]
This finite test enumerates $D_{\sigma\to\tau}$. Equality and order
are pointwise, application is lookup, and the constant-$\bot_\tau$
table is least.

To compute $\Phi_G(\eta)$, enumerate each rule's abstract argument tuples
and evaluate its body by \cref{eq:finite-denotation}; higher-type
arguments are finite function tables. Grouping the results by successive
arguments constructs the curried table for each nonterminal. Thus the
update, equality, and cardinality of the finite product $E_G$ are computable.

Use the standard ascending fixed-point iteration
of abstract interpretation~\cite[Section~8]{cousot1977}:
initialize $\eta_0(F)=\bot_{\tau_F}$ for every declared nonterminal and
iterate $\eta_{n+1}=\Phi_G(\eta_n)$. Each round updates all entries
simultaneously, using the same old environment $\eta_n$ throughout.
For example, take $\mathsf{S}:o$ and $\mathsf{F}:o\to o$ with rules
$\mathsf{S}\to\mathsf{F}\,z$ and $\mathsf{F}\,x\to x$.
For each $d\in D_o$, the local environment is $[x\mapsto d]$, so
\[
\begin{aligned}
 \Phi_G(\eta)(\mathsf{F})(d)
   &=\ev{x}_{\eta,[x\mapsto d]}=d,\\
 \Phi_G(\eta)(\mathsf{S})&=\eta(\mathsf{F})(1).
\end{aligned}
\]
The pair $(\eta_n(\mathsf{S}),\eta_n(\mathsf{F}))$ therefore evolves as
\[
 (0,\,d\mapsto0)\;\longmapsto\;(0,\,d\mapsto d)
 \;\longmapsto\;(1,\,d\mapsto d)\;\longmapsto\;(1,\,d\mapsto d).
\]
The start value stays $0$ in the first update, while the function table
changes. This shows why an unchanged start value alone cannot justify stopping.

Stop when $\eta_{n+1}=\eta_n$, comparing all entries of all function
tables. Subsequent environments then remain equal. Since $\eta_0$ is
least and $\Phi_G$ is monotone, the sequence is increasing and stabilizes
by $B=|E_G|$: the chain cannot contain $B+1$ distinct environments.
Hence one can stop at the first equality or compute
exactly $B$ rounds; write $\eta_*=\eta_B$.
For any fixed point $\theta$, induction gives $\eta_n\le\theta$ for
every $n$, so $\eta_*$ is the least fixed point.
This computes $\ev{t}_{\eta_*}$ for every closed typed term $t$.
The next subsection proves that its ground value distinguishes head
convergence from divergence.

\subsection{Adequacy for head convergence}\label{sec:head-adequacy}

The preceding construction computes $\ev{t}_{\eta_*}$.
We now prove that, for a closed ground term, this value is $1$ exactly
when actual head unfolding exposes a terminal after finitely many steps.

\begin{lemma}[Head adequacy]\label{lem:head}
For a well-formed scheme and a closed ground term $t$,
\[
 \ev{t}_{\eta_*}=1
 \quad\Longleftrightarrow\quad
 \exists a,\vec u\;t\Downarrow_G a(\vec u).
\]
\end{lemma}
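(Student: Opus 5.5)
We prove the two directions separately. Soundness (convergence implies value $1$) is a subject-reduction argument. Completeness (value $1$ implies convergence) goes through the finite approximants $\eta_n$ and a logical relation. Throughout we use that $\eta_*$ is a fixed point of $\Phi_G$. Hence for a closed saturated call $F\,u_1\cdots u_r$ with body $t_F$, a substitution lemma gives $\ev{F\,\vec u}_{\eta_*}=\ev{t_F}_{\eta_*,\rho}$ with $\rho(x_i)=\ev{u_i}_{\eta_*}$. This equals $\ev{t_F[\vec u/\vec x]}_{\eta_*}$, so the value of a closed ground term is invariant under head steps. The substitution lemma is proved by routine induction on $t_F$ using the compositional clauses of \cref{eq:finite-denotation}.

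\emph{Direction ($\Leftarrow$).} Suppose $t\Downarrow_G a(\vec u)$. By invariance along the finitely many head steps, $\ev{t}_{\eta_*}=\ev{a\,\vec u}_{\eta_*}$. The latter equals $1$ by the terminal clause, whatever the arguments' values.

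\emph{Direction ($\Rightarrow$).} This is the main obstacle, since divergent computations must not receive value $1$ through higher-order arguments. We induct on the approximation index. Since $\eta_*=\eta_n$ for some $n$, it suffices to prove, for every $n$, that every closed ground $t$ with $\ev{t}_{\eta_n}=1$ converges. For this we define a Kripke-free logical relation $\mathcal R_\sigma\subseteq D_\sigma\times\{\text{closed terms of type }\sigma\}$:
\begin{itemize}
\item $d\,\mathcal R_o\,t$ iff $d=1$ implies that $t$ exposes a terminal;
\item $f\,\mathcal R_{\sigma\to\tau}\,t$ iff $d\,\mathcal R_\sigma\,u$ implies $f(d)\,\mathcal R_\tau\,t\,u$.
\end{itemize}

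We then establish four facts, each by routine induction:
\begin{enumerate}
\item $\bot_\sigma\,\mathcal R_\sigma\,t$ for all $t$.
\item $\mathcal R$ is downward closed in $d$, by monotonicity.
\item $\mathcal R$ is closed under head expansion: if $t\,\vec w\to_h t'$ then relations for $t'$ transfer to $t$. This uses determinism of head steps.
\item Each terminal $a$ is related to $\lambda\vec d.1$, since $a\,\vec u$ converges immediately.
\end{enumerate}

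The main claim is that $\eta_n(F)\,\mathcal R_{\tau_F}\,F$ for every nonterminal $F$ and every $n$, proved by induction on $n$.
\begin{itemize}
\item The base case $n=0$ is fact 1.
\item For the step, take related arguments $v_i\,\mathcal R\,u_i$. We must show $\ev{t_F}_{\eta_n,\rho_{\vec v}}\,\mathcal R\,t_F[\vec u/\vec x]$, followed by head expansion using fact 3.
\item This follows from a fundamental lemma, proved by induction on the body term. If $\eta(G)\,\mathcal R\,G$ for all nonterminals $G$ and $\rho(x)\,\mathcal R\,\theta(x)$ for all parameters, then $\ev{s}_{\eta,\rho}\,\mathcal R\,s\theta$. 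The variable and nonterminal cases hold by hypothesis, the terminal case by fact 4, and application is immediate from the definition of $\mathcal R$ at arrow types.
\end{itemize}

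Applying the fundamental lemma to a closed ground $t$ with $\eta_n$ gives $\ev{t}_{\eta_n}\,\mathcal R_o\,t$, which yields the claim.

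The delicate point is that the fixed-point equation holds only in the limit. Hence we relate each finite stage $\eta_n$ rather than $\eta_*$ directly, and use stabilization $\eta_*=\eta_B$ from \cref{sec:finite-tables} to conclude.
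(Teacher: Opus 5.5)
Your proposal is correct and takes essentially the same route as the paper. The direction from convergence to value $1$ uses invariance of $\ev{\cdot}_{\eta_*}$ under head steps, via the fixed-point equation and semantic substitution. The converse uses the same logical relation, proved for each approximant $\eta_n$ by induction on $n$, with the bottom and terminal cases, a fundamental substitution lemma, and ground head expansion. Your downward-closure fact is never used, and head expansion does not actually require determinism.
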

\begin{proof}
We first record that substitution preserves typing. Induction on the
term proves this: variables take the types of their substituted
arguments, constants are unchanged, and application uses the induction
hypotheses. Thus head unfolding preserves typing and exposes well-typed
children.

\begin{enumerate}[label=\textbf{(\arabic*)},leftmargin=*,itemsep=0.8em]
\item \textbf{Soundness ($\Rightarrow$).}
We must show that a semantic value of $1$ has a finite operational
witness. To carry this guarantee through higher-order application,
we relate terms to abstract values: at ground type, value $1$ requires
head convergence; at function types, related arguments must produce
related results. For closed $t:\sigma$ and $v\in D_\sigma$, define
$t\mathrel{\mathcal R_\sigma}v$ by
\[
\begin{aligned}
 t\mathrel{\mathcal R_o}v
   &\iff v=1\Longrightarrow\exists a,\vec u\;t\Downarrow_G a(\vec u),\\
 t\mathrel{\mathcal R_{\sigma\to\tau}}f
   &\iff\forall u,v\;
       [u\mathrel{\mathcal R_\sigma}v
                     \Longrightarrow t\,u\mathrel{\mathcal R_\tau}f(v)].
\end{aligned}
\]
Every closed typed term is related to the least element at its type.
This follows by induction on the type: the base implication for zero
imposes no convergence claim, and a bottom function always returns a
bottom value.
A terminal is related to its constant-one interpretation, since its
saturated application exposes that terminal without evaluating its
children.

The relation is preserved by interpreting terms. More precisely, for a
term $t:\sigma$, if every nonterminal $F$ is related to $\eta(F)$ and each closed argument
$u_i$ is related to $\rho(x_i)$, then
\[
 t[\vec u/\vec x]\mathrel{\mathcal R_\sigma}\ev{t}_{\eta,\rho}.
\]
Induction on $t$ proves it: lookup handles variables and nonterminals,
the preceding observation handles terminals, and the function clause
handles application.

We now induct on the iteration number $n$ to show
$F\mathrel{\mathcal R_{\tau_F}}\eta_n(F)$ for every nonterminal.
The base case follows from $\eta_0(F)=\bot_{\tau_F}$.
For the step, take arbitrary closed arguments $u_i$ related to values
$v_i$. The substitution property relates the instantiated body
$t_F[\vec u/\vec x]$ to $\Phi_G(\eta_n)(F)(\vec v)$.
Prepending the actual head step
$F\,\vec u\to_h t_F[\vec u/\vec x]$ preserves the ground relation:
any head exposure of the body is also a head exposure of the call.
Quantifying over the arguments gives the required function relation
for $F$ at $\eta_{n+1}(F)$.
At $n=B$, all nonterminals are therefore related to their values in
$\eta_*$. Applying the substitution property to the closed ground
term $t$ gives $t\mathrel{\mathcal R_o}\ev{t}_{\eta_*}$.
If this value is $1$, the ground clause yields the required finite
head exposure.

\item \textbf{Completeness ($\Leftarrow$).}
Suppose a finite head-unfolding sequence takes $t$ to a terminal
application. We show that each head step preserves the interpretation
in $\eta_*$. Semantic substitution, proved by
structural induction, and the fixed-point equation
$\eta_*=\Phi_G(\eta_*)$ give, for each rule $F\,\vec x\to t_F$
and matching closed arguments $\vec u$,
\[
\begin{aligned}
 \ev{F\,\vec u}_{\eta_*}
   &=\eta_*(F)(\ev{\vec u}_{\eta_*})\\
   &=\ev{t_F}_{\eta_*,[\vec x\mapsto\ev{\vec u}_{\eta_*}]}\\
   &=\ev{t_F[\vec u/\vec x]}_{\eta_*}.
\end{aligned}
\]
Here the denotation of a tuple is taken componentwise.
The first equality is application, the second uses the fixed point
and the definition of $\Phi_G$, and the third is semantic substitution.
The terminal application at the end of the finite head-unfolding
sequence has value $1$, independently of its children.
Induction along that sequence therefore gives $\ev{t}_{\eta_*}=1$,
proving the reverse implication.\qedhere
\end{enumerate}
\end{proof}

\subsection{Exact finite prefixes}

Define $\mathsf{head}_G(t)$ as follows. If $\ev{t}_{\eta_*}=0$, return
$\bot$. Otherwise run deterministic head unfolding until a terminal
$a(\vec u)$ is reached, and return that terminal and its arguments.
The search terminates by \cref{lem:head} and requires no a priori
bound on the number of head steps.
Typing is preserved, so each returned child can be processed in turn.

Head search can try successive finite unfolding budgets, since
success at any budget is decidable.
It is invoked only after head adequacy has proved that some budget
succeeds. This gives a total computable head reader. On a negative
finite-model answer, no search is performed.

The exact address reader $\mathsf{read}_G(t,w)$ returns a terminal label
$a\in\Sigma$ at an existing node, $\bot$ at a present divergent node,
and $\dagger$ when the address is absent. Define it by recursion on $w$:
\[
\begin{aligned}
 \mathsf{read}_G(t,\epsilon)&=
 \begin{cases}
  \bot,&\mathsf{head}_G(t)=\bot,\\
  a,&\mathsf{head}_G(t)=a(\vec u),
 \end{cases}\\[0.6em]
 \mathsf{read}_G(t,iw')&=
 \begin{cases}
  \mathsf{read}_G(u_i,w'),
     &\mathsf{head}_G(t)=a(\vec u)\land i<|\vec u|,\\
  \dagger,&\text{otherwise}.
 \end{cases}
\end{aligned}
\]
The recursion terminates by address length, with each head call total.

\begin{lemma}[Exact prefixes]\label{lem:prefix}
For every well-formed $G$, closed ground $t$, and finite address $w$,
the reader returns exactly the label, bottom node,
or absence at address $w$ in $\BT_G(t)$. Consequently exact prefixes
of any specified finite depth are total computable.
\end{lemma}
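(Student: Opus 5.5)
The plan is induction on the length of $w$, universally over the closed ground term $t$, comparing $\mathsf{read}_G(t,w)$ with the recursive definition of $\BT_G(t)$. Two facts are taken from earlier in the paper. First, head rewriting is deterministic: any two head exposures of $t$ give the same terminal and the same children. Second, head unfolding preserves typing, so the children returned by $\mathsf{head}_G$ are closed ground terms. Before the induction, I will show that $\mathsf{head}_G(t)$ is total and correct. By \cref{lem:head}, $\ev{t}_{\eta_*}=0$ holds exactly when no terminal is ever exposed, and then the root of $\BT_G(t)$ is $\bot$. Otherwise the same lemma guarantees that the bounded search succeeds at some finite budget, and determinism ensures that the exposure it finds is \emph{the} head exposure $a(\vec u)$ used to define $\BT_G(t)$.

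For the base case $w=\epsilon$ the reader returns $\bot$ or the exposed label, and these match the root of $\BT_G(t)$ by the correctness of $\mathsf{head}_G$. For $w=iw'$ there are three cases.
\begin{enumerate}
\item If $\mathsf{head}_G(t)=\bot$, the root is the nullary $\bot$, so every proper address is absent; the reader returns $\dagger$, which is correct.
\item If $\mathsf{head}_G(t)=a(\vec u)$ with $i\ge|\vec u|$, the child is absent by the rank of $a$; the reader again returns $\dagger$.
\item If $i<|\vec u|$, the subtree of $\BT_G(t)$ at child $i$ is $\BT_G(u_i)$ by definition. The reader calls $\mathsf{read}_G(u_i,w')$, and the induction hypothesis applies to the shorter address $w'$ and the well-typed closed ground term $u_i$.
\end{enumerate}
Termination holds because each call consumes one address symbol and every $\mathsf{head}_G$ call is total.

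For the consequence, consider a depth $d$. The addresses of $\BT_G(t)$ of length at most $d$ form a finite set: at each present node, the label (a terminal of known arity, or $\bot$) bounds the child indices. I will compute the exact prefix by breadth-first exploration, reading each node and expanding only the children in range. This enumerates exactly the present addresses with their labels; any other address is $\dagger$ by the case analysis above.

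The main obstacle I expect is the correctness of $\mathsf{head}_G$. Here the finite-model answer must be used to justify the unbounded search, not merely to decide it. I would rely on soundness and completeness from \cref{lem:head} together with determinism of head steps, so that "some budget succeeds" and "the result is the unique exposure" are both secured before the search is launched. The rest is routine structural bookkeeping.
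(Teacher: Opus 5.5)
Your proposal is correct and follows essentially the same route as the paper: induction on $w$, with \cref{lem:head} and the determinism of head steps settling the root, and the same three-way case split at $iw'$. The only difference is cosmetic. For the consequence you explore the present nodes breadth-first, whereas the paper reads every address in $\{0,\ldots,b-1\}^*$ of bounded length, with $b$ the maximal terminal arity; both compute the same finite table, since every other address is one the reader maps to $\dagger$.
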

\begin{proof}
Induct on $w$. At $\epsilon$, \cref{lem:head} determines convergence
and head unfolding supplies the terminal label. At $iw'$, bottom has
no children, invalid terminal-child indices are absent, and valid
indices invoke the induction hypothesis.
Let $b$ be the maximum terminal arity, or zero for an empty alphabet.
Reading all $w\in\{0,\ldots,b-1\}^*$ with $|w|<d$ in a fixed order
computes $\Pref(G,d)$. Every actual node has such an address, so
equality of all prefix tables is equivalent to tree equality.
\end{proof}

Finally, define the Boolean depth test by
\[
\begin{aligned}
 Q(G,t,0)&=\mathsf{true},\\
 Q(G,t,d+1)&=
 \begin{cases}
  \mathsf{false},&\mathsf{head}_G(t)=\bot,\\
  \displaystyle\bigwedge_{i<r}Q(G,u_i,d),
     &\mathsf{head}_G(t)=a(u_0,\ldots,u_{r-1}).
 \end{cases}
\end{aligned}
\]
Induction on $d$ identifies this total test with depth-$d$ productivity.
A nullary terminal passes because the conjunction is empty.
For start terms abbreviate it to $Q(G,d)$. Hence
\[
\begin{split}
 \langle G,H\rangle\in\EqProd\iff
 \forall d\;[&
 \Valid_{4,\Sigma_2}(G,H)\ \land\
 \Pref(G,d)=\Pref(H,d)\\
 &{}\land Q(G,d)\land Q(H,d)] .
\end{split}
\]
On invalid input the matrix is defined to be false before any semantic
operation. This is a computable universal Boolean definition.
Conjoining the decidable checks $\Safe(G)\land\Safe(H)$ gives the
same upper bound for $\EqSafeProd$. Omitting the $Q$ tests or the
syntactic restrictions gives the bounds for $\EqFour$ and $\EqH$.
Together with the reduction in \cref{sec:binary-target}, this proves
\cref{thm:main}.

%% file: sections/07-consequences.tex
\Needspace{12\baselineskip}
\section{Consequences}\label{sec:consequences}

For pushdown automata we use the ranked-tree convention
of~\cite[Section~3.3]{hague2017}: a deterministic higher-order automaton
without collapse, together with its output labelling, generates a tree
by unfolding transitions and contracting silent steps. Infinite silent
runs give $\bot$ leaves.

\begin{corollary}[Pushdown-tree equivalence]\label{cor:pda-tree}
Equality of the labelled trees generated by deterministic order-four
higher-order pushdown automata without collapse under this convention
is undecidable.
\end{corollary}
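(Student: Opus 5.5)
The plan is to transfer the lower bound of \cref{thm:main} along the effective, order-preserving translation from safe schemes to higher-order pushdown automata without collapse \cite[Theorems~5.1 and~5.3]{knapik2002} (see also \cite[Theorem~2]{hague2017}). Given a two-counter machine $M$, the compiler of \cref{sec:compiler} and \cref{sec:binary-target} produces the safe, productive schemes $\widehat G_M^L,\widehat G_M^R$ over $\Sigma_2$, of order at most four, with
\[
 M\nonhalts\iff\BT(\widehat G_M^L)=\BT(\widehat G_M^R).
\]
I then apply the translation to each scheme. This yields deterministic order-four automata $\mathcal A_M^L,\mathcal A_M^R$ without collapse. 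Under the tree convention of \cite[Section~3.3]{hague2017}, the tree generated by each automaton is exactly the value tree of the corresponding scheme. The composite map $M\mapsto(\mathcal A_M^L,\mathcal A_M^R)$ is total computable. Malformed source codes are sent to two copies of a fixed trivial automaton, as in \cref{sec:binary-target}.

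The key steps are as follows. First, I check the hypotheses of the cited translation. It requires safety, which is \cref{lem:safety} together with \cref{lem:binary-encoding}. It also requires a homogeneously typed scheme of order $n$, and it produces an automaton of level $n$. The order bound four comes from \cref{lem:typing} and is preserved by the final pass. Second, I check that the notion of generated tree matches. The translation identifies $\BT(G)$, including the convention that infinite silent runs give $\bot$, with the automaton's tree after silent steps are contracted. For our schemes this is simplified by \cref{lem:productive}: both trees are $\bot$-free, so any difference in how divergence is represented is irrelevant. Third, I conclude. Equality of the two automaton trees holds iff $M\nonhalts$, and zero-start halting is c.e.-complete (\cref{sec:preliminaries}). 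Hence the equality problem is $\PiOne$-hard and in particular undecidable.

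I expect the main obstacle to be aligning conventions. The cited theorems are stated for a particular presentation: ranked output labels attached to states or transitions, determinism in the sense of at most one applicable transition per configuration, and child ordering fixed by the transition's output arity. I must make sure that determinism of the automaton is inherited from determinism of the scheme, with one rule per nonterminal, and that child order is preserved. To handle this, I would go through the construction in \cite{knapik2002} and confirm that each terminal $d$ or $e$ emitted by the scheme becomes a labelled branching with children in the same order. I would also confirm that all other steps are silent, deterministic stack operations at level at most four.
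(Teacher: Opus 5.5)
Your proposal is correct and follows the paper's proof. Both arguments compose the reduction of \cref{eq:main-reduction} with the effective, order-preserving translation of safe schemes into automata without collapse \cite[Theorem~2]{hague2017}, which preserves the ordered labelled tree, so deciding automaton equality would decide source halting; your remarks that productivity makes the $\bot$ convention irrelevant and that child order is preserved (the paper's one-based/zero-based reindexing) only refine this same argument.
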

\begin{proof}
The effective, order-preserving translation
of~\cite[Theorem~2, p.~14]{hague2017} converts each safe final compiled scheme into
an order-four automaton with the same ordered labelled tree.
The convention's one-based child indices correspond to our zero-based
indices by uniform reindexing. By \cref{eq:main-reduction}, equality of the two
translated trees is equivalent to source nonhalting. A decision
procedure for these automata would therefore decide source halting.
\end{proof}

\begin{corollary}[$\lambda Y$ B\"ohm-tree equivalence]\label{cor:lambda-y}
Equality of the $\eta$-long B\"ohm trees of simply typed $\lambda Y$-terms
up to $\alpha$-equivalence is $\PiOne$-complete under total computable
many-one reductions.
Inputs are pairs of terms of the same type in a common typed context.
\end{corollary}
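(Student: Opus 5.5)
The plan is to prove the two halves separately. For hardness we compose the reduction of \cref{eq:main-reduction} with the standard embedding of recursion schemes into $\lambda Y$. For membership we compose the translation of~\cite[Theorem~11 and Corollary~19]{clairambault2013} with the prefix reader of \cref{lem:prefix}. As the statement indicates, neither transfer needs to preserve the order bound or safety.

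\textbf{Hardness.} Fix the common typed context $\Gamma=\{e:o,\ d:o\to o\to o\}$, so the terminals of $\Sigma_2$ become free variables. Given a final compiled scheme $\widehat G_M^\diamond$ with nonterminals $F_1,\ldots,F_n$, we form a closed-in-$\Gamma$ term $t_M^\diamond:o$. It solves the mutually recursive rule system by Beki\'c's construction: each $F_j$ becomes a nested $Y$-term whose body is $\lambda\vec x.\,t_{F_j}$, with the other nonterminals substituted. Equivalently, we use the direction of~\cite[Theorem~11]{clairambault2013} that maps schemes to $\lambda Y$-terms. The key claim is that the $\eta$-long B\"ohm tree of $t_M^\diamond$ equals $\BT(\widehat G_M^\diamond)$. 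I would verify this claim through the observations in \cref{eq:observation-equality}.
\begin{enumerate}[label=(\roman*)]
\item Head reduction of $t_M^\diamond$ simulates head unfolding, so each head exposure of $d$ or $e$ corresponds to an observation of the scheme.
\item Because the term has ground type and every variable of $\Gamma$ has first-order type, $\eta$-expansion introduces no binders at any node. The B\"ohm tree is therefore exactly the labelled tree over $\Sigma_2\cup\{\bot\}$.
\end{enumerate}
This map is total computable, and malformed source codes go to a fixed equal pair. Hence $M\nonhalts$ holds iff the two B\"ohm trees coincide, and $\PiOne$-hardness follows from the c.e.-completeness of zero-start halting.

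\textbf{Membership.} An input is a pair of $\lambda Y$-terms. Undecodable or ill-typed inputs, and pairs whose types or contexts differ, are rejected by decidable syntactic checks. For a valid pair we apply the effective translation of~\cite[Theorem~11]{clairambault2013}. It yields schemes $G_1,G_2$ generating binder-free encodings of the two $\eta$-long B\"ohm trees, with binders recorded by de Bruijn levels. \cite[Corollary~19]{clairambault2013} says that these encodings coincide exactly when the B\"ohm trees are $\alpha$-equivalent. The encodings use terminal alphabets that depend on the input, through the context and the level bounds. We therefore take the union of the two ranked alphabets and view both schemes over it; this changes neither generated tree. \cref{lem:prefix} makes $\Pref(G_i,d)$ computable for every $d$, including $\bot$ nodes. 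So equality of the B\"ohm trees is equivalent to $\forall d\,[\Pref(G_1,d)=\Pref(G_2,d)]$, which is a $\PiOne$ definition.

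\textbf{Expected main obstacle.} I expect the main difficulty to be the precise interface with~\cite{clairambault2013}. Two points need checking. First, the encoding must be injective on $\alpha$-classes, so that it both preserves and reflects equality, uniformly for a pair sharing a context. Second, divergent subterms must be encoded as genuine $\bot$ nodes (or as some fixed marker) at corresponding addresses, so that the exact reader sees them faithfully. I would check these points against the exact statement of Corollary~19, and, if needed, fix a canonical level assignment determined by the common context.
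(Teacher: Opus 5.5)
Your proposal is correct and follows essentially the paper's route. The paper likewise transfers $\PiOne$-hardness along the HORS-to-$\lambda Y$ direction, and it obtains the upper bound along the $\lambda Y$-to-HORS translation of~\cite{clairambault2013} into HORS equality, whose $\PiOne$ bound comes from the exact-prefix reader of \cref{lem:prefix}. Your Beki\'c embedding over the first-order context $\{e:o,\ d:o\to o\to o\}$, where no binders arise, only makes explicit the hardness direction that the paper cites.
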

\begin{proof}
The translations of~\cite[Theorem~11 and Corollary~19, pp.~96, 101]{clairambault2013}
give computable many-one reductions in both directions between HORS
equivalence and this problem. The B\"ohm-tree encoding preserves binding
information through de Bruijn levels, using a common enumeration of free
variables. The HORS-to-$\lambda Y$ direction transfers $\PiOne$-hardness
from \cref{thm:main}. The reverse direction transfers its $\PiOne$
upper bound.
\end{proof}

\paragraph{Limits of effective comparison.}
The main theorem also rules out complete effective methods for
certifying equality or bounding the depth of disagreement.

\begin{corollary}\label{cor:certificates}
No sound and complete effective system of finite certificates recognizes
tree equality throughout the safe productive order-four class,
even over $\Sigma_2$.
\end{corollary}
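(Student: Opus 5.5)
We argue by contradiction and reduce to \cref{thm:main}. A "sound and complete effective system of finite certificates" we formalize as a c.e. relation $V(\langle G,H\rangle,c)$ on finite certificates $c\in\N$ such that $\langle G,H\rangle$ has some certificate exactly when $\langle G,H\rangle\in\EqSafeProd$. Equivalently, we assume a total computable checker whose accepted pairs are precisely the equal pairs of the class. Soundness means no unequal pair, and no pair outside the class, receives an accepted certificate. Completeness means every equal pair in the class does. Under this formalization, $\EqSafeProd=\{x:\exists c\,V(x,c)\}$ is $\Sigma^0_1$, i.e.\ c.e.

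The key step combines this with the upper bound of \cref{sec:upper}. By the universal definition proved at the end of that section, $\EqSafeProd$ is $\PiOne$. A set that is both c.e.\ and co-c.e.\ is decidable: given an input, we run in parallel the search for a certificate and the search for a depth $d$ at which the matrix of the $\PiOne$ definition fails. That matrix includes validity, prefix equality via \cref{lem:prefix}, the productivity tests $Q$, and the decidable safety and order checks. Exactly one search halts. Hence $\EqSafeProd$ would be decidable.

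This contradicts \cref{thm:main}. Concretely, we compose with the total computable reduction $M\mapsto\langle\widehat G_M^L,\widehat G_M^R\rangle$ of \cref{eq:main-reduction}. Its outputs always lie in the safe productive order-four class over $\Sigma_2$, and malformed codes are sent to an equal pair. The composite would decide $M\nonhalts$, hence zero-start halting, which is c.e.-complete and undecidable (\cref{sec:preliminaries}). In fact the certificate system alone already suffices for the contradiction: it would make $\{M: M\nonhalts\}$ c.e.\ through the reduction, and together with c.e.\ membership of halting this again gives decidability.

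The only delicate point is the formalization of "effective certificate system." I expect that to be the main obstacle: we must make precise that certificate checking is decidable, or at least that the valid certificate pairs form a c.e.\ set. Once this is fixed, the argument is a routine $\Sigma^0_1\cap\PiOne$ computation. We will also state explicitly that completeness is required only on the restricted class, and that the reduction's images stay inside it, which \cref{thm:compiler} and \cref{lem:binary-encoding} guarantee.
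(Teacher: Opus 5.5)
Your proposal is correct and follows the paper's argument. Certificate enumeration semidecides equality, exact-prefix search (equivalently, the $\PiOne$ matrix of \cref{sec:upper}) semidecides inequality, and dovetailing the two would decide equality on the compiler's outputs, contradicting \cref{eq:main-reduction}. The one difference is that your main formalization also requires certificates to be rejected outside the class, so that their projection is exactly $\EqSafeProd$; the paper, like your closing remark restricted to the reduction's image, needs soundness and completeness only on the compiled pairs, which is the weaker hypothesis that ``throughout the class'' suggests.
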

\begin{proof}
Enumerating and checking the certificates would semidecide equality.
Exact-prefix search semidecides inequality. Dovetailing these searches
would decide equality on the final compiler's outputs, contradicting
\cref{eq:main-reduction} and source undecidability.
\end{proof}

\begin{corollary}\label{cor:depth}
There is no total computable function $B$ of input-description size
such that every unequal safe productive order-four pair over $\Sigma_2$
has a discrepancy at an address of length at most $B(n)$.
\end{corollary}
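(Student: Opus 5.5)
The argument is by contradiction. Suppose such a total computable $B$ exists. I will use it to decide $\EqSafeProd$ restricted to the outputs of the final compiler, which contradicts \cref{eq:main-reduction} together with the undecidability of zero-start halting.

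Given a source program $M$, first compute the pair $\widehat G_M^L,\widehat G_M^R$. By \cref{thm:compiler,lem:binary-encoding}, both schemes are safe, productive, of order at most four, and over $\Sigma_2$, so they lie in the class the hypothesis on $B$ covers. Let $n$ be the size of the encoded pair $\langle\widehat G_M^L,\widehat G_M^R\rangle$. (If size is meant per scheme, take the maximum of the two sizes; since $B$ need not be monotone, replace it by $B'(n)=\max_{k\le n}B(k)$, which is still total computable.)

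Next compute $d=B(n)+1$, and compute the exact prefixes $\Pref(\widehat G_M^L,d)$ and $\Pref(\widehat G_M^R,d)$ using \cref{lem:prefix}. This step is total computable even without productivity, although here productivity holds anyway. Then compare all labels at addresses of length at most $B(n)$.

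For correctness of this test:
\begin{itemize}
\item If the two trees are equal, the prefixes coincide, so the test reports equality.
\item If the trees are unequal, the hypothesis gives a discrepancy at an address of length at most $B(n)$. Both trees are productive, so this is a common address carrying different terminal labels. Such an address appears in the depth-$d$ prefixes, and the comparison detects it.
\end{itemize}

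The procedure therefore decides whether $\BT(\widehat G_M^L)=\BT(\widehat G_M^R)$. By \cref{eq:main-reduction}, that decides $M\nonhalts$, which contradicts the undecidability of zero-start halting stated in \cref{sec:preliminaries}.

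The only delicate point is the convention for what counts as a discrepancy. The statement should be read as a common address where the labels differ; under productivity, \cref{thm:compiler} shows that this is equivalent to inequality of the trees. The bounded prefix comparison then captures the condition exactly. Apart from that, every step is routine, because exact prefix computation is already available from \cref{lem:prefix}.
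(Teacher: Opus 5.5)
Your proposal is correct and follows the paper's own argument: compute the final compiled pair, compare its exact depth-$(B(n)+1)$ prefixes using \cref{lem:prefix}, and observe that this would decide source halting through \cref{eq:main-reduction}. Your remarks on the size convention and on making $B$ monotone are harmless refinements that the paper's brief proof leaves implicit.
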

\begin{proof}
Compare the depth-$B(n)+1$ prefixes of the final compiled pair of size $n$.
A difference implies source halting; agreement forces tree equality
and hence nonhalting. This would decide the source halting problem.
\end{proof}

\paragraph{The order-three boundary.}
The order-four bound arises from the stored polynomial interface. While arithmetic helpers
have order at most three, $\NN$ has order two; hence $\PP = \NN \to \NN \to \NN$ has order
three, which pushes $\mathsf{Next}_\pm$ and the control nonterminals $\mathsf{F}_i$ to order
four (\cref{lem:typing}). Inlining update helpers cannot eliminate this bottleneck.
Reducing the target order to three would require either storing polynomial representations
of order at most two capable of subsequent updates and evaluations, or bypassing stored
polynomials entirely. We therefore leave HORS equivalence at orders two and three open
for future work.

%% file: sections/06-formalization.tex
\paragraph{Formalization.}
A supplementary Rocq development covers both compiler stages, including
typing, safety, the order bound, productivity, and correspondence with
standard HORS head rewriting, as well as the finite semantics and
observation algorithms used for the upper bound. It relies on the Rocq
standard library and the Library of Undecidability Proofs~\cite{undecidability},
with completeness formulated in the synthetic computability framework
of~\cite{forster2019}; conventional effectivity is established by the
paper's explicit algorithms. The consequences in \cref{sec:consequences},
including the pushdown-automata translation, remain paper-level deductions.
Codex was used as an interactive aid during formalization development
(e.g., for boilerplate and proof scripts) and for revising the paper's prose.
All formal definitions, theorems, and proofs were audited and verified by
the authors. The underlying theory was developed independently by the authors.
The artifact is available at 
\href{https://doi.org/10.5281/zenodo.23734555}{\texttt{doi:10.5281/zenodo.23734555}}; 
its \texttt{README} provides the precise theorem statements, assumptions, and 
reproduction instructions.